\newcommand{\Qo}{\Delta}
\begin{document}
\title{A Novel Algorithm for Representing Positive Semi-Definite Polynomials as Sums of Squares with Rational Coefficients}
\titlerunning{Sum of Squares with Rational Coefficients}
%
\author{Zhenbing Zeng\inst{1}\orcidID{0000-0001-9728-1114} \and
Yong Huang\inst{2,3}\orcidID{0000-0001-8401-2866} \and
Lu Yang\inst{4} \and
Yongsheng Rao\inst{2,3}\orcidID{0000-0001-9615-3658}
}
\authorrunning{Z. Zeng et al.}
%
\institute{
Department of Mathematics, Shanghai University, Shanghai 200444, China. 
\email{zbzeng@shu.edu.cn}\\
\and
Guangzhou University, Institute of Computing Science and Technology, Guangzhou 510006, China\\
\and
Guangdong Province Research Center for Mathematical Educational Software, Guangzhou 510006, China\\
\email{gzjzes@qq.com, rysheng@gzhu.edu.cn}
\and 
Chengdu Institute of Computer Applications, Chinese Academy of Sciences, Chengdu $610213$, China.   
\email{luyang@casit.ac.cn}
}


%
\maketitle              
\begin{abstract}
This paper presents a novel algorithm for constructing a sum-of-squares (SOS) decomposition for positive semi-definite polynomials with rational coefficients. Unlike previous methods that typically yield SOS decompositions with floating-point coefficients, our approach ensures that all coefficients in the decomposition remain rational. This is particularly useful in formal verification and symbolic computation, where exact arithmetic is required. We introduce a stepwise reduction technique that transforms a given polynomial into a sum of ladder-like squares while preserving rationality. Experimental results demonstrate the effectiveness of our method compared to existing numerical approaches.

\keywords{Positive semi-definite polynomial \and Sum of squares of polynomials \and  Rational SOS.}
\end{abstract}
\section{Motivation and Background}
\label{introduction}

In 1888, Hilbert \cite{Hilbert1888} showed that every non-negative homogeneous polynomial in $n$ variables and degree 2d can be represented as sum of squares of other polynomials if and only if either (a) $n = 2$ or (b) $2d = 2$ or (c) $n = 3$ and $2d = 4$. (See also \cite{Hilbert1893,Hilbert1910,Artin1927}). 
In particularly, Any univariate polynomial $p(x)$ of degree $2d$ that satisfies $p(x)\geq 0$ for all $x\in \mathbb{R}$ can be represented as sum of squares. 
If we already know all the roots of \( p(x) = 0 \), for example,  
\[
\alpha_i + \beta_i\sqrt{-1}, \quad \alpha_i - \beta_i\sqrt{-1}, \quad i = 1,2,\ldots,d,
\]  
then we have the following SOS representation:  
\[
p(x) = \prod_{i=1}^{d} \left( (x - \alpha_i)^2 + \beta_i^2 \right).
\]
Thus, applying the transformation rule  
\begin{equation}
(A^2 + B^2) \cdot (C^2 + D^2) = (A \cdot C + B \cdot D)^2 + (A \cdot D - B \cdot C)^2,
\label{tr-1}
\end{equation} 
inductively, we can express \( p(x) \) as a sum of squares of one polynomial of degree $d$ and another one polynomial of degree $\leq d-1$.  For example, when \( 2d = 4, 6 \), we have:  
\begin{align}
&\;((x-\alpha_1)^2+\beta_1^2) \cdot ((x-\alpha_2)^2+\beta_2^2) \nonumber \\
=&\; \left( (x - \alpha_1)(x - \alpha_2) + \beta_1 \beta_2 \right)^2 
+ \uwave{\left( (x - \alpha_1) \beta_2 - \beta_1 (x - \alpha_2) \right)^2}, 
\label{eq-2}
\end{align}
and
\begin{align}
&\;\left( (x - \alpha_1)^2 + \beta_1^2 \right) \cdot \left( (x - \alpha_2)^2 + \beta_2^2 \right) \cdot \left( (x - \alpha_3)^2 + \beta_3^2 \right).\nonumber\\
=&\;\left( \left( (x - \alpha_1)(x - \alpha_2) + \beta_1 \beta_2 \right)(x - \alpha_3) + \left( (x - \alpha_1)\beta_2 - \beta_1 (x - \alpha_2) \right) \beta_3 \right)^2 \nonumber \\
&\;+ \uwave{\left( \left( (x - \alpha_1)(x - \alpha_2) + \beta_1 \beta_2 \right) \beta_3 - \left( (x - \alpha_1)\beta_2 - \beta_1 (x - \alpha_2) \right)(x - \alpha_3) \right)^2},
\label{eq-3}
\end{align}
where the express with under-wave are the square of lower degree polynomial. Obviously, they are also semi-positive-definite polynomials, thus, performing the transformation for the under-waved part, we can write them as a sum of two squares, one is of a polynomial of degree $\leq d-1$,  and other one is of a polynomial of degree $\leq d-2$. Do this recursively, we can express $f(x)$ finally as a sum of at most $d+1$ squares in the following form:
$$
p(x)=q_d(x)^2+q_{d-1}(x)^2+\ldots+q_1(x)^2+q_0^2,
$$
where $q_j(x)\,(j=d,d-1,\ldots,1)$ are zero polynomials or univariate polynomials of degree $j$, and $q_0\in \mathbb{R}$. This observation inspired the following general theorem.

\begin{theorem}
Any semi-positive definite polynomial $p(x)$
od degree $2d$ can be expressed as the sum of squares of as follows:
\begin{equation}
p(x) = {q_d(x)}^2 + {q_{d-1}(x)}^2 + \ldots + {q_1(x)}^2 + q_0^2,
\label{eq-0001}
\end{equation}
where $q_k(x)\, (1 \leq k \leq d)$ is a zero polynomial or a polynomial of degree $k$, and $q_0(x) = q_0$ is a real number.
\label{thm-1}
\end{theorem}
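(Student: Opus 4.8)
The plan is to prove Theorem~\ref{thm-1} by giving a clean, self-contained version of the inductive argument that the preceding discussion already illustrates on the cases $2d = 4, 6$. The key structural fact I would isolate is a single reduction step: \emph{any} semi-positive-definite (nonnegative) univariate polynomial of degree $2d$ can be written as a sum of two squares $a(x)^2 + b(x)^2$ with $\deg a = d$ and $\deg b \le d-1$. Granting such a reduction, the theorem follows by induction on $d$, since $b(x)^2$ is itself nonnegative of degree $\le 2(d-1)$ and the induction hypothesis applies to it, peeling off one square at a time until we reach the constant $q_0^2$.

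To establish the reduction step I would proceed via the complete factorization of $p(x)$ over $\mathbb{R}$, exactly as set up in the excerpt. First I would note that a nonnegative real polynomial has even degree, a positive leading coefficient, and all of its real roots of even multiplicity; hence it factors as $c\cdot\prod_j (x-r_j)^{2m_j}\cdot\prod_i\bigl((x-\alpha_i)^2+\beta_i^2\bigr)$ with $c>0$ and $\beta_i\ne 0$. Absorbing $\sqrt{c}$ and each real factor $(x-r_j)^{m_j}$ into the other terms (for instance by writing $(x-r_j)^{2m_j}=\bigl((x-r_j)^{m_j}\bigr)^2+0^2$ as a trivial sum of two squares), I reduce to showing that a product of $d$ factors of the form $(x-\alpha_i)^2+\beta_i^2$ is a sum of two squares with the stated degree bounds. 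This is precisely what the Gaussian-integer-style identity~\eqref{tr-1} delivers: applying it inductively over the $d$ quadratic factors expresses the product as $A(x)^2 + B(x)^2$, and a degree bookkeeping shows that the ``real part'' $A$ has degree exactly $d$ while the ``imaginary part'' $B$ has degree at most $d-1$, matching the displays~\eqref{eq-2} and~\eqref{eq-3}.

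With the reduction step in hand, the induction is routine. The base case $d=0$ is a nonnegative constant, which is $q_0^2$. For the inductive step, apply the reduction to obtain $p(x) = q_d(x)^2 + b(x)^2$ with $\deg q_d = d$ and $\deg b \le d-1$; then $b(x)^2$ is a nonnegative polynomial of degree $\le 2(d-1)$, so by the induction hypothesis it decomposes as $q_{d-1}(x)^2 + \cdots + q_1(x)^2 + q_0^2$ with the required degree pattern, and concatenating gives~\eqref{eq-0001}. One must be slightly careful that the degree of $b$ may drop below $d-1$, in which case some intermediate $q_k$ are simply taken to be the zero polynomial; the statement explicitly allows this, so no generality is lost.

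The main obstacle I anticipate is not the algebra of the reduction step—that is forced by the identity~\eqref{tr-1}—but rather the degree bookkeeping that guarantees \emph{both} that $\deg q_d = d$ (so the top square genuinely has the leading behaviour of $p$) and that $\deg b \le d-1$ (so the induction strictly decreases). Tracking these bounds through the inductive application of~\eqref{tr-1} requires care about which partial product plays the role of $(C^2+D^2)$ at each stage; the cleanest route is to maintain the invariant that after combining $k$ quadratic factors the ``real part'' has degree exactly $k$ and the ``imaginary part'' has degree at most $k-1$, and to verify that multiplying by one more factor $(x-\alpha_{k+1})^2+\beta_{k+1}^2$ preserves this invariant via~\eqref{tr-1}. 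Once that invariant is checked, the remainder of the proof is mechanical.
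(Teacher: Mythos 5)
Your proposal is correct, but it is not the route the paper's own proof takes, so a comparison is in order. The paper proves Theorem~\ref{thm-1} by induction on $d$, and its key step is \emph{not} the identity \eqref{tr-1}: writing $p(x)=\prod_i\left((x-\alpha_i)^2+\beta_i^2\right)$, the paper peels off the single square $\left(\prod_i(x-\alpha_i)\right)^2$ and groups \emph{all} remaining cross terms into a residual $r(x)$, as in \eqref{eq-pxr}--\eqref{residual}; each such term is a product of squares, so $r(x)\ge 0$ automatically, and its leading coefficient $\sum_i\beta_i^2>0$ makes $\deg r$ exactly two less than $\deg p$, so the induction hypothesis is applied to $r(x)$ (with the perfect-square case $\beta_i\equiv 0$ handled separately). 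You instead prove the classical two-square lemma --- $p=a^2+b^2$ with $\deg a=d$, $\deg b\le d-1$ --- by iterating the Brahmagupta--Fibonacci identity \eqref{tr-1} over the quadratic factors while maintaining a degree invariant; this is essentially a rigorous version of the computation the paper presents only as motivation in displays \eqref{eq-2} and \eqref{eq-3} and then abandons in the formal proof. Your route buys a strictly stronger conclusion: since zero polynomials are permitted among the $q_k$, the decomposition $p=a^2+b^2+0+\cdots+0$ already has the form \eqref{eq-0001}, which makes your outer induction on $d$ redundant (applying the induction hypothesis to $b^2$, itself a single square, gains nothing); the price is the degree bookkeeping through \eqref{tr-1}, which you correctly identify as the delicate point and resolve with the stated invariant. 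The paper's route avoids that bookkeeping entirely --- nonnegativity and the degree drop of the residual are immediate --- but pays by having to re-factor a genuinely new polynomial $r(x)$ at every level of the recursion, and it naturally produces up to $d+1$ nonzero squares, i.e.\ the degree-descending shape the paper wants to showcase ahead of Theorem~\ref{thm-2}, whereas your decomposition concentrates everything into two squares plus padding.
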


\begin{proof}
Without loss of generality we can assume that the leading coefficient of $p(x)$ is 1. We shall use mathematics induction to $d$ to prove the theorem. 

When $2d=2$, the polynomial is in the form of $x^2 + ax + b$, which can be factored as:
$$
x^2 + ax + b = \left(x - {a}/{2}\right)^2 + \left(b - {a^2}/{4}\right),
$$
which is a sum of squares, in view $p(x)$ wither has no real roots or has two equal real roots, which means that $\Delta(p)=a^2-4b\leq 0$. 

Assume that the theorem is proved for degree $2d$ case, i.e., any semi-positive-definite polynomial which degree is $2d$ can be expressed as a sum of squares of some degree-descending polynomials. We aim to prove that given a polynomial $p(x)$ of degree $2d+2$, $p(x)$ can also be written as the sum of squares of some degree-descending polynomials. For this, we can assume that the roots of $p(x)=0$ are as follows:  
$$
\alpha_0 + \beta_0 \sqrt{-1}, \; \alpha_0 - \beta_0 \sqrt{-1}; \quad 
\alpha_1 + \beta_1 \sqrt{-1}, \; \alpha_1 - \beta_1 \sqrt{-1}; 
$$
$$
\ldots; \quad 
\alpha_d + \beta_d \sqrt{-1}, \; \alpha_d - \beta_d \sqrt{-1},
$$
where $\alpha_i, \beta_i \; (0 \leq i \leq n)$ are real numbers. If 
$$
\beta_0=\beta_1=\cdots=\beta_n=0,
$$
then $p(x)$ is a square of a degree $d+1$ polynomial, so the conclusion is true for $p(x)$. Otherwise, we have
\begin{align}
p(x) &= \left( (x - \alpha_0)^2 + \beta_0^2 \right) \times \left( (x - \alpha_1)^2 + \beta_1^2 \right) \times \cdots \times \left( (x - \alpha_d)^2 + \beta_d^2 \right) \nonumber \\
&= (x - \alpha_0)^2(x - \alpha_1)^2 \cdots (x - \alpha_d)^2 +  r(x),
\label{eq-pxr}
\end{align}
where $r(x)$ is the sum of other products of $(x-\alpha_i)^2$ and $\beta_j^2$, namely
\begin{equation}
r(x)=\sum_{{1\leq k\leq n+1}\substack {\{I,J\}}} \beta_{i_1}^2\cdots\beta_{i_k}^2
\times 
(x-\alpha)_{j_1}^2\cdots (x-\alpha)_{j_{n+1-k}}^2,
\label{residual}
\end{equation}
where $\{I,J\}$ stands for 
$$
\{i_1,\ldots,i_k,j_1,\ldots,j_{n+1-k}\}
=\{1,2,\ldots,d+1\}.
$$
Clearly, the degree of $r(x)$ is $2(d-1)$, since its leading coefficient is 
$$
\beta_0^2+\beta_1^2+\cdots+\beta_{n}^2\not=0. 
$$
According to the inductive assumption, $r(x)$ can be represented as a sum of squares of degree ascending polynomials, which implies immediately that $p(x)$ can also be represented as a sum of squares of degree-descending polynomials as we claimed. 
This completes the proof of Theorem~\ref{thm-1}. \qed
\end{proof}

Notice that the SOS generated by Theorem~\ref{thm-1} and the SOS constructed by expressions \eqref{eq-2} and \eqref{eq-3}---although both are sum-of-squares representations of degree-descending polynomials (for shorter, DDP sum of squares or DDP-SOS in forthcoming discussion)---do not necessarily consist of the same specific polynomials. This indicates that the DDP-SOS representation of a positive semidefinite polynomial is not unique. For example, for $2d=6$, from~\eqref{thm-1} we have
$$
q_3(x)=x^3-(\alpha_1+\alpha_2+\alpha_3)x^2
+(\alpha_1\alpha_2+\alpha_1\alpha_3+\alpha_2\alpha_3)x
-\alpha_1\alpha_2\alpha_3,
$$
while \eqref{eq-3} yields
\begin{align*}
q_3(x)=&\;
\left( (x - \alpha_1)(x - \alpha_2) + \beta_1 \beta_2 \right)(x - \alpha_3) + \left( (x - \alpha_1)\beta_2 - \beta_1 (x - \alpha_2) \right) \beta_3 \\
=&\; x^3-(\alpha_1\!+\!\alpha_2\!+\!\alpha_3)x^2+(\alpha_1\alpha_2\!+\!\alpha_1\alpha_3\!+\!\alpha_2\alpha_3+\beta_1\beta_2\!-\!\beta_1\beta_2\!+\!\beta_2\beta_3)x\\
&\;-\alpha_1\alpha_2\alpha_3-\alpha_1\beta_2\beta_3+\alpha_2\beta_1\beta_3-\alpha_3\beta_1\beta_2.
\end{align*}
In general, for given seimi-positive polynomial 
$$
p(x)=x^{2d}+c_{2d-1}x^{2d-1}+c_{2d-2}x^{2d-2}+\cdots+c_1x+c_0,
$$
we have
$$
q_d(x)=x^d-\left(
\sum_{i=1}^{d}\alpha_i\right)x+
\left(\sum_{1\leq i<j\leq d} \alpha_i\alpha_j \right)\, x^{d-2}+\cdots+
(-1)^d\alpha_1\alpha_2\cdots \alpha_d.
$$
According to Vieta's theorem, we have the following equalities:
$$
\sum_{i=1}^{d}\alpha_i=
\frac{1}{2}\sum_{i=1}^{d}
\left((\alpha_1+\beta_1\sqrt{-1})+(\alpha_1-\beta_1\sqrt{-1})\right)
=-\frac{1}{2}c_{2d-1},
$$
$$
2\times \sum_{1\leq i<j\leq d} \alpha_i\alpha_j=
\left(
\sum_{i=1}^{d}\alpha_i\right)^2-
\sum_{i=1}^{d}\alpha_i^2=\frac{1}{4}c_{2d-1}^2-
(\alpha_1^2+\alpha_2^2+\cdots+\alpha_d^2)
,
$$
and
$$
\left(\alpha_1^2+\alpha_2^2+\cdots+\alpha_d^2\right)
-
\left( \beta_1^2+\cdots+\beta_{n}^2 \right)=
\frac{1}{2}\sum_{k=1}^{2d}z_k^2
=\frac{1}{2}c_{2d-1}^2-c_{2d-2},
$$
here $z_1,z_2,\ldots,z_{2d}$ are $2d$ roots of the equation $p(x)=0$. 
It leads the following identity:
\begin{equation}
 q_d(x)=x^{d}+\frac{1}{2}c_{2d-1}x^{d-1}
 +\frac{1}{2}\left(-\frac{1}{4}c_{2d-1}^2+c_{2d-2}-B_2\right)x^{d-2}+\cdots,
 \label{qdb2}
\end{equation}
\begin{equation}
q_{d-1}(x)=B_2\cdot \left(x^{d-1}+\cdots\right),
\label{qdb21}
\end{equation}
where $B_2=\beta_1^2+\beta_2^2+\ldots+\beta_d^2$. 
Apparently, if there is polynomials or rational functions $A_1,A_2,\ldots,A_d$ in terms of $c_{0},c_1,\ldots,c_{2d-1},c_{2d}$ that can express the $k$-power sums 
$$
P_k(\alpha_1,\alpha_2,\ldots,\alpha_d)=
\sum_{1\leq k\leq d}\alpha_{k}^d,
$$
then applying the inverse of Newtons-Girard formulas
(see \cite{Gathen2003})
we can express the elementary symmetric functions of  $\alpha_1,\alpha_2$, $\ldots,\alpha_d$ in terms of $c_0,c_1,\ldots,c_{2d-1},c_{2d}$,
and therefore, compute the
polynomial 
$$
p_d(x)=(x-\alpha_1)(x-\alpha_2)\cdots (x-\alpha_d)
$$
in an explicit way, and hence establish a method to express semi-positive-definite polynomial in DDP sum of squares without solving complicated equations.   
As far as we know, there were no such considerations in previous research on SOS. 
Through computer-assisted research on several specific examples, we believe that the initially determined search scope for functions such as \( A_2, \ldots, A_d \) may be too narrow. Substituting \eqref{qdb2} and~\eqref{qdb21} into \eqref{eq-0001},
we have
\begin{align*}
&\;x^{2d}+c_{2d-1}x^{2d-1}+c_{2d-2}x^{2d-2}+c_{2d-3}x^{2d-3}+\cdots\\
=&\;
(x^d+\frac{1}{2}c_{2d-1}x^{d-1}+
\uwave{\frac{1}{2}(-\frac{1}{4}c_{2d-1}^2+c_{2d-2}-B_2)}x^{d-2}+{\color{red}a_{d,d-3}}x^{d-3}+\cdots)^2
\\
&\;+ 
B_2\,(x^{d-1}+{\color{red}a_{d-1,d-2}}x^{d-2}\cdots)^2+q_{d-2}(x)^2+\cdots+q_0^2,
\end{align*}
Expanding the squares and compare the coefficients of $x^{2d}, x^{2d-1},x^{2d-2},x^{2d-3}$, we obtain the following equations:
\begin{align}
    \text{for } x^{2d-1}:\phantom{xx}&\;
    c_{2d-1}=2\times \frac{1}{2}c_{2d-1},
    \nonumber \\
    \text{for } x^{2d-2}:\phantom{xx}&\;
c_{2d-2}=(\frac{1}{2}c_{2d-1})^2+ 
2\times \uwave{\frac{1}{2}(-\frac{1}{4}c_{2d-1}^2+c_{2d-2}-B_2)}+B_2,
    \label{eq-x2d2}\\
    \text{for } x^{2d-3}:\phantom{xx}&\;
c_{2d-3}= 2\times {\color{red}a_{d,d-3}}+2\times \frac{1}{2}c_{2d-1}\times 
\uwave{\frac{1}{2}(-\frac{1}{4}c_{2d-1}^2+c_{2d-2}-B_2)}   
    \nonumber \\
    &\phantom{xxxxxx}+2B_2\times {\color{red}\alpha_{d-1,d-2}},
    \label{eq-x2d3}
\end{align}
from the equations we observed that $B_2$ can be any real number, not necessarily taking the value $\beta_1^2+\beta_2^2+\ldots+\beta_d^2$. Even more, we can choose any special value of ${\color{red}\alpha_{d,d-3}}$, then solve ${\color{red}\alpha_{d-1,d-2}}$ from the equation~\eqref{eq-x2d3}.  
Considering that if we put all 
$$
(d+1)+d+\ldots+2+1=\frac{1}{2}(d+1)(d+2)
$$
coefficients of $q_d(x),q_{d-1}(x),\ldots$, $q_1(x), q_0$ as undetermined parameters, then from \eqref{eq-0001} we will get $2d+1$ equations, which is much less than the number of variables.    
This additional finding motivates us to further explore the structural characteristics of the system of coefficient equations to find more fine properties of SOS. 

\section{SOS representation of degree-strictly-descending polynomials}
\label{ddp-sos}

In this section we prove that
any positive definite univariate polynomial always can be represented as an SOS of degree-strictly-descending polynomials. We have the following result. 

\begin{theorem}
If $p(x)$ is positive definite, and $\deg(p,x)=2d$, then  $p(x)$ can be expressed as the sum of squares of $d+1$ polynomials with decreasing degrees.  
If $p(x)$ is semi-positive definite, 
$\deg(p,x)=2d$,  
$$
p(x)=p_1(x)\times 
\prod_{j=1}^{k_0}(x-r_j)^{2}, \;
r_1,\ldots,r_{k_0}\in \mathbb{R},
$$
and $p_1(x)$ is positive definite, 
then $p(x)$ can be expressed as the sum of squares of $d-k_0+1$ polynomials with decreasing degrees from $d$ to $d-k_0$.   
\label{thm-2}
\end{theorem}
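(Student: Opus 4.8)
The plan is to sharpen the induction underlying Theorem~\ref{thm-1} so that, in the positive definite case, no degree is ever skipped, and then to deduce the semi-positive definite case from the positive definite one by peeling off the repeated real factors $\prod_{j=1}^{k_0}(x-r_j)^2$.

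For the positive definite statement I would induct on $d$. The base case $d=0$ is immediate, since a positive constant $c$ equals $(\sqrt c)^2$. For the inductive step, assume $\deg(p,x)=2d$ with $p>0$ on $\mathbb{R}$, and write
\[
p(x)=\prod_{i=0}^{d-1}\bigl((x-\alpha_i)^2+\beta_i^2\bigr),\qquad q_d(x)=\prod_{i=0}^{d-1}(x-\alpha_i),
\]
so that $q_d$ is monic of degree exactly $d$. Set $r(x)=p(x)-q_d(x)^2$. Because $p$ is positive definite it has no real root, hence every $\beta_i\neq 0$; expanding the product shows that the top-degree part of $r$ comes from keeping exactly one factor $\beta_i^2$, so $\deg r=2(d-1)$ with positive leading coefficient $B_2=\sum_{i=0}^{d-1}\beta_i^2>0$.

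The crucial observation is that $r$ is again positive definite. Indeed, for any real $x_0$: if $q_d(x_0)\neq 0$ then every $(x_0-\alpha_i)^2>0$, and adjoining the strictly positive quantities $\beta_i^2$ enlarges the product, so $p(x_0)>q_d(x_0)^2$, i.e. $r(x_0)>0$; while if $q_d(x_0)=0$ then $r(x_0)=p(x_0)>0$ since $p$ is positive definite. With this in hand I would normalize $r=B_2\,\hat r$ with $\hat r$ monic, apply the induction hypothesis to the positive definite polynomial $\hat r$ of degree $2(d-1)$ to obtain $\hat r=\sum_{k=0}^{d-1}\hat q_k^2$ with $\deg\hat q_k=k$, and then rescale via $q_k=\sqrt{B_2}\,\hat q_k$ (which preserves each degree). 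This yields $p=\sum_{k=0}^{d}q_k^2$ with $\deg q_k=k$ for every $k$, that is, exactly $d+1$ squares of strictly decreasing degrees.

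For the semi-positive definite case, put $g(x)=\prod_{j=1}^{k_0}(x-r_j)$, so that $p=p_1\,g^2$ with $p_1$ positive definite of degree $2(d-k_0)$. Applying the positive definite result to $p_1$ gives $p_1=\sum_{k=0}^{d-k_0}u_k^2$ with $\deg u_k=k$; multiplying through by $g^2$ then produces $p=\sum_{k=0}^{d-k_0}(g\,u_k)^2$, a sum of $d-k_0+1$ squares whose degrees $k_0,k_0+1,\ldots,d$ are pairwise distinct, hence strictly decreasing from $d$ down to $k_0$. The single genuinely delicate point in the whole argument is the positive definiteness of the residual $r=p-q_d^2$: it is exactly this property that prevents the induction from ever generating a zero polynomial, and thereby forces strictly (rather than merely weakly) decreasing degrees, and it is also precisely the property that breaks down when $p$ has real roots, which is why those roots must first be stripped off through the factor $g^2$ before the positive definite machinery is applied.
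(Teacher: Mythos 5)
Your proposal is correct and follows essentially the same route as the paper: both peel off the square of the monic polynomial $q_d(x)=\prod_i(x-\alpha_i)$ built from the real parts of the roots, show the residual $r=p-q_d^2$ is positive definite of degree $2(d-1)$ with leading coefficient $\sum_i\beta_i^2>0$, recurse/induct on this fact, and settle the semi-definite case by factoring out the real roots and applying the definite case to $p_1$. The only differences are minor: the paper certifies positivity of $r$ from its explicit expansion as a sum of squares of polynomials (one summand being the positive constant $\prod_i\beta_i^2$), whereas you use the pointwise comparison $p(x_0)>q_d(x_0)^2$ when $q_d(x_0)\neq 0$ --- both valid --- and your semi-definite step, multiplying by $g(x)=\prod_j(x-r_j)$, silently corrects the paper's typo of multiplying by $\prod_j(x-r_j)^2$.
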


\begin{proof} First consider the case that $p(x)$ is positive definite. 
Without loss of generality, assume that the leading coefficient of $p(x)$ is 1. Since $p(x)$ is positive definite, $p(x)$ has $d$ pairs of complex conjugate roots  
\[
\alpha_i\pm\beta_i\sqrt{-1}, \quad i=1,2,\ldots,d,
\]
where $\alpha_i, \beta_i$ are real numbers, $\beta_i > 0$ for $1\leq i\leq d$.  
Thus, $p(x)$ can be rewritten as  
\begin{equation}
  p(x)=\prod_{i=1}^{d} \left[(x-\alpha_i)^2+\beta_i^2\right]=
\left(q_d(x)\vphantom{x_1^2}\right)^2+r_{d-1}(x),
\label{eq-0002}
\end{equation}
where $q_d(x)=\prod_{i=1}^{d}(x-\alpha_i)$ is a polynomial of degree $d$, and  
\begin{align}
r_{d-1}(x)&\;=\;
\sum_{k=1}^{d}\left(\frac{\beta_k}{x-\alpha_k}\,q_d(x)\right)^2\nonumber \\
&\;+\sum_{1\le k,l\le {d}, \ k\neq l}\left(\frac{\beta_k \beta_l}{(x-\alpha_k)(x-\alpha_l)}\,q_d(x)\right)^2
+\cdots+ \left(\prod_{k=0}^{d}\beta_k\right)^2.
\nonumber
\end{align}
Clearly, $\deg(r_{d-1},x)\!=\!2(d-1)$, and the leading coefficient of $r_{d-1}(x)$ is 
$b_1 = \sum_{k=1}^d \beta_k^2 >0$. 
Since $r_{d-1}(x)$ is a sum of squares of polynomials, 
and $\beta_1,\ldots,\beta_d > 0$ ,  its constant term is positive, making $r_{d-1}(x)$ strictly positive definite. 

Applying the same argument we can write 
$r(x)$ in the following form:
\begin{equation}
r_{d-1}(x)=b_1\times \left(\left(q_{d-1}(x)\right)^2+r_{d-2}(x)\right)
\label{eq-rd1}
\end{equation}
where
$$
q_{d-1}(x)= \prod_{j-1}^{d-1}(x-\alpha'_j),\quad 
r_{d-2}(x)=
\sum_{j=1}^{d-1}\left(\frac{\beta'_k}
{x-\alpha'_k}q_{d-1}(x) \right)^2+\cdots+\prod_{j=1}^{d} {\beta'_j}^2,
$$ 
and $\alpha'_j,\beta'_j>0\,(j=1,\ldots,d-1)$ are real numbers, such that $\beta'_1,\ldots,\beta'_{d-1}>0$, and  
$$
\alpha'_j+\beta'_j\sqrt{-1}, \;
\alpha'_j-\beta'_j\sqrt{-1}, \quad j=1,2,\ldots,d-1
$$
are the $2(d-1)$ complex roots of $r_{d-1}(x)=0$. Apparently, $r_{d-2}$ is a positive definite of degree $d-2$ with leading coefficient $b_2=\sum_{j=1}^{d-1}{\beta'_j}^2>0$. Therefore, we can perform this computation for $r_{d-2}(x)$ and get
\begin{equation}
r_{d-2}(x)=b_2\times \left(
(\left(q_{d-2}(x)\right)^2+r_{d-3}(x)
\right),
\label{eq-rd2}
\end{equation}
where $q_{d-2}(x)$ is a polynomial of degree $d-2$ with leading coefficient $1$, and $r_{d-3}(x)$ is a positive definite polynomial of degree $2(d-3)$ with leading coefficient $b_3>0$. 
Inductively, we have
\begin{align}
r_{d-3}(x)=&\;b_3\times \left(
(\left(q_{d-3}(x)\right)^2+r_{d-4}(x)
\right),
\label{eq-rd3}\\
\vdots\;&\nonumber \\
r_{2}(x)=&\;b_{d-2}\times \left(
(\left(q_{2}(x)\right)^2+r_{1}(x)
\right),
\label{eq-rdj}
\end{align}
where $q_{j}\,(j=d-3,\ldots,2)$ are polynomial of degree $j$ with leading coefficient $1$, 
$r_{j}\,(j=d-4,\ldots,1)$ are  positive definite polynomial of degree $2j$ with leading coefficient $b_{d-j}>0$.  In particular,
\begin{equation}
r_1(x)=b_{d-1}\times (x^2+ax+b)
b_1(x)=b_{d-1}\times \left(\left(x+\frac{a}{2}\right)^2+
(b-\frac{a^2}{4})\right)
\label{eq-rdd}
\end{equation}
is a positive definite quadratic polynomial. Let $q_1(x)=x+a/2$, $b_d=b-a^2/4$. Then from \eqref{eq-0002} to \eqref{eq-rdd} we have
\begin{align}
p(x)=q_d(x)^2
+b_1\cdot q_{d-1}(x)^2
+b_1b_2\cdot q_{d-2}(x)^2\nonumber \\
+\cdots+b_1b_2\cdots b_{d-1}q_1(x)^2+
b_1b_2\cdots b_d,
\end{align}
which shows that $p(x)$ can be expressed as a sum of square of $d+1$ polynomials of degree from $d$ to zero, as claimed in the Theorem~\ref{thm-2}.  

Now, consider the case that 
$$
p(x)=p_1(x)\times r(x), \quad 
r(x)=(x-r_1)^2\cdots (x-r_{k_0})^2,
$$
where $p_1(x)$ is a positive definite polynomial of degree $2(d-k_0)$, $r_1,\ldots,r_{k_0}$ are real numbers.
Let
\[
p_1(x)=\left(\tilde{q}_{d-k_0}(x)\right)^2+\left(\tilde{q}_{d-k_0-1}(x)\right)^2+\ldots+\left(\tilde{q}_{0}(x)\right)^2,
\]
be the SOS of $p_1(x)$ into $d-k_0+1$ degree-descending-polynomials, 
where $\tilde{q}_{k}(x)$ (for $1\leq k\leq d-k_0$) is a real-coefficient polynomial of degree $k$ with a nonzero leading coefficient,  
and $\tilde{q}_{0}(x)$ is a nonzero real number. Let
$$
q_j=\tilde{q}_{j-k_0}\cdot r(x), \; \;
j=d,d-1,\ldots,k_0.
$$
Then
$$
p(x)=q_d(x)^2+q_{d-1}(x)^2+\ldots+
q_{k_0}(x)^2.
$$
which means that $p(x)$ can be written as the sum of squares of $d-k_0+1$ polynomials with decreasing degrees,  
where the highest and lowest degrees are $d$ and $k_0$, respectively.  

{Theorem~\ref{thm-2} is proved.}  
\qed  
    
\end{proof}  

\section{An algorithm for constructing DDP-SOS representation
}
\label{algorithm}

Assume that 
$$
p(x)=x^{2d}+c_{2d-1}x^{2d-1}+\cdots+c_1x+c_0,
$$
is a positive definite polynomial of degree $2d$,  and $q_k(x)\,(k=d,d-1,\ldots,1,0)$ is a sequence of polynomials satisfying $\deg(q_k(x),x)=k$ and
\begin{equation} 
p(x)=q_d(x)^2+q_{d-1}(x)^2+\ldots+q_{1}(x)^2+q_{0}^2.
\label{eq-001}
\end{equation} 
Let 
\begin{equation}
q_k(x)=a_{k,k}x^k+a_{k,k-1}x^{k-1}+\cdots+a_{k,1}x+a_{k,0}, \quad k=d,d-1,\ldots,1,0.
\label{eq-qks}
\end{equation}
Then the DDP SOS representation~\eqref{eq-001} can be written as follows: 
$$
p(x)=\left(x^{d},x^{d-1},\cdots,x,1\right)
L\,L^T
\left(
x^{d},\,
x^{d-1},\,
\cdots,\,
x,\,
1
\right)^T,
$$
here $L$ is the following lower-triangle matrix:
\begin{equation}
L=\left(
\begin{array}{cccccc}
a_{d,d}&\\[3pt]
a_{d,d-1}&{\color{blue}a_{d-1,d-1}}&\\[5pt]
a_{d,d-2}&\;{\color{red}a_{d-1,d-2}}\;&\;{\color{blue}a_{d-2,d-2}}\;&\\[5pt]
\vdots&{\color{red}\vdots}&{\color{red}\vdots}&\begin{turn}{13} ${\color{blue}\ddots}$\end{turn}&\\[5pt]
a_{d,1}&{\color{red}a_{d-1,1}}&{\color{red}a_{d-2,1}}&{\color{red}\cdots}&{\color{blue}a_{1,1}}&\\[5pt]
a_{d,0}&a_{d-1,0}&a_{d-2,0}&\phantom{x}\cdots\phantom{x}&\phantom{x}a_{1,0}\phantom{x}&\;\;a_{0,0}
\end{array}
\right)_{\;\;(d+1)\times (d+1)}
\end{equation}
\begin{picture}(200,10)(10,0)
{\color{red}
\put(90,38){\line(1,0){7}}
\put(100,38){\line(1,0){7}}
\put(110,38){\line(1,0){7}}
\put(120,38){\line(1,0){7}}
\put(130,38){\line(1,0){7}}
\put(140,38){\line(1,0){7}}
\put(150,38){\line(1,0){7}}
\put(160,38){\line(1,0){7}}
\put(170,38){\line(1,0){7}}
\put(180,38){\line(1,0){7}}
\put(190,38){\line(1,0){7}}
\put(200,38){\line(1,0){7}}
\put(210,38){\line(1,0){7}}
\put(90,38){\line(0,1){5}}
\put(90,46){\line(0,1){5}}
\put(90,54){\line(0,1){5}}
\put(90,62){\line(0,1){5}}
\put(90,70){\line(0,1){5}}
\put(90,78){\line(0,1){5}}
\put(90,86){\line(0,1){5}}
\put(90,94){\line(0,1){5}}
\put(90,102){\line(2,-1){10}}
\put(104,95){\line(2,-1){10}}
\put(118,88){\line(2,-1){10}}
\put(132,81){\line(2,-1){10}}
\put(146,74){\line(2,-1){10}}
\put(160,67){\line(2,-1){10}}
\put(174,60){\line(2,-1){10}}
\put(188,53){\line(2,-1){10}}
\put(202,46){\line(2,-1){10}}
}
\end{picture}

\noindent Here for convenience, we 
divide the 
the entries $a_{i,j}$ into three categories:
\begin{enumerate}
    \item $B=\{a_{i,j}, i=n \text{ or } j=0\}$, contains $2d+1$ {\it border\/} entries in the first column and the last row;  
    \item $C=\{a_{i,j}, 1\leq j<i\leq d-1\}$, 
    the {\it core\/} entries (printed in red) inside the triangle area enclosed by the dashed lines;
    \item $D=\{a_{i,j}, 0<i=j<d\}$, includes the {\it diagonal\/} entries (printed in blue) in the diagonal, except $a_{d,d}$ and $a_{0,0}$. 
\end{enumerate}

Substituting $q_k(x)$ in~\eqref{eq-qks} into \eqref{eq-0001} and comparing the coefficients of $x^{2d}$, $x^{2d-1}$, $\ldots$, $x,1$ of left and right sides, 
we obtain the following equations: 

\begin{align}
  c_{2d}&=1={\color{red}\uwave{a_{d,d}^2}},
  \nonumber\\
  c_{2d-1}&=2a_{d,d}{\color{red}\uwave{a_{d,d-1}}},
  \nonumber\\
  c_{2d-2}&=2a_{d,d}{\color{red}\uwave{a_{d,d-2}}}+G_{2d-2}(C,D;\uline{a_{d,d-1}}),
  \nonumber\\
  &\phantom{x}\vdots\nonumber\\[5pt]
  c_{d+1}&=2a_{d,d}{\color{red}\uwave{a_{d,1}}}+G_{d+1}(C,D; \uline{a_{d,d-1}, \ldots, a_{d,2}}),
  \nonumber\\
  c_{d}&=2a_{d,d}{\color{red}\uwave{a_{d,0}}}+G_d(C,D; \uline{a_{d,d-1}, \ldots, a_{d,2},a_{d,1}});
  \nonumber\\
  c_{d-1}&=2a_{d-1,d-1}{\color{red}\uwave{a_{d-1,0}}}+G_{d-1}(C,D;  \uline{a_{d,d-1}, \ldots, a_{d,1}}; \uuline{a_{d,0}}),
  \nonumber\\
  c_{d-2}&=2a_{d-2,d-2}{\color{red}\uwave{a_{d-2,0}}}+G_{d-2}(C,D; \uline{a_{d,d-1}, \ldots, a_{d,1}}; \uuline{a_{d,0},a_{d-1,0}}),
  \nonumber\\ 
  &\phantom{x}\vdots\nonumber\\[5pt]
  c_1&=2a_{1,1}{\color{red}\uwave{a_{1,0}}}+G_1(C,D; \uline{a_{d,d-1}, \ldots, a_{d,1}}; \uuline{a_{d,0},\ldots, a_{2,0}}),
  \nonumber\\
  c_0&={\color{red}\uwave{a_{0,0}^2}}+a_{1,0}^2+\ldots+a_{d,0}^2,
\label{the-eqs}
\end{align}
where $G_{2d-2},\ldots,G_{d+1},G_{d},\ldots,G_{1}$ are polynomials.  
Consider the border variables, i.e.,  $a_{i,j}\in B$,  as main variables and $a_i,j\in C\cup D$ as free parameters, and 
define the following order of main variables:
$$
a_{d,d}\prec a_{d,d-1} \prec a_{d,d-2} \prec \ldots \prec  a_{d,0} \prec a_{d-1,0} \prec a_{d-2,0} \prec \ldots \prec a_{1,0} \prec a_{0,0}.
$$
Them, we can observe that equation~\eqref{the-eqs} is exactly a triangular system with respect to the main variables. In other words, the main variables that appear in the equations form the following structure:
\begin{align*}
    \;{\color{red}a_{d,d}}\\
    \;{\color{red}a_{d,d-1}},\;a_{d,d}\\
    \;{\color{red}a_{d,d-2}},\;a_{d,d-1},\;a_{d,d}\\
    \;\vdots\\
    \;{\color{red}a_{d,0}},\;a_{d,1},\;\ldots,\; a_{d,d-2},\;a_{d,d-1},\;a_{d,d}\\
    \;{\color{red}a_{d-1,0}},a_{d,0},\;a_{d,1},\;\ldots,\; a_{d,d-2},\;a_{d,d-1},\;a_{d,d}\\
    \;{\color{red}a_{d-2,0}},\;a_{d-1,0},a_{d,0},\;a_{d,1},\;\ldots,\; a_{d,d-2},\;a_{d,d-1},\;a_{d,d}\\
    \l\vdots\\
    \;{\color{red}a_{1,0}},\;\ldots,\;a_{d-2,0},\;\;a_{d-1,0},a_{d,0},\;a_{d,1},\;\ldots,\; a_{d,d-2},\;a_{d,d-1},\;a_{d,d}\\
    \;{\color{red}a_{0,0}},\;a_{1,0},\;\ldots,\;a_{d-2,0},\;\;a_{d-1,0},a_{d,0},\;a_{d,1},\;\ldots,\; a_{d,d-2},\;a_{d,d-1},\;a_{d,d}
\end{align*}
We can also see that, in fact, among the $2d+1$ equations, the first one and the last one are quadratic equations, and all other $2d-1$ can be viewed as a linear equations about their leading variables (i.e., those are printed with underwaves).  
Therefore, we can solve the equation system~\eqref{the-eqs} successively, and get the following solutions:
\begin{align*}
a_{d,d}=&\;1,\\
a_{d,i}=&\;\frac{1}{2}\left(c_{d+i}-G_{d+i}(C,D; a_{d,d-1},\ldots,a_{d,i+1})\right), \\
&\; \quad (i=d-1,\ldots,1,0);\\
a_{j,0}=&\;\frac{1}{2a_{j,j}}\left(c_j-G_j(C,D; a_{d,d-1},\ldots,a_{d,1};a_{d,0},\ldots,a_{j+1,0})\right),\\
&\; \quad (j=d-1,d-2,\ldots,1);\\
a_{0,0}^2=&\;c_0-a_{d,0}^2-a_{d-1,0}^2-\ldots-a_{1,0}^2.
\end{align*}
Notice that $a_{d,d}=1$, and $\alpha_{d,d-1}$ is a polynomial of variables in $C,D$, thus, substituting $a_{d,d},a_{d,d-1}$ into $a_{d,d-2}$ we can see that $a_{d,d-2}$ is also a polynomial of variables in $C,D$. Performing this substitution successively for $a_{d,i}$ for $i=d-3,\ldots,1,0$ we can obtain the polynomial solution of $a_{d,i}\,(i=d-1,d-2,\ldots,1,0$:
\begin{equation}
a_{d,i}=P_i(C,D), \quad i=d-1,d-2,\ldots,1,0.
\label{soladi}
\end{equation}
Let $Q_d(C,D)=P_0(C,D)$. Then, $a_{d,0}=Q_d(C,D)$.  
Substitute the above solution~\eqref{soladi} into $a_{d-1,0}$, we get a rational solution of $a_{d-1,0}$:
$$
a_{d-1,0}=Q_{d-1}(C,D),
$$
and substitute this and ~\eqref{soladi} into $a_{d-2,0}$, we get
the rational solution of $a_{d-2,0}$
$$
a_{d-2,0}=Q_{d-2}(C,D),
$$
perform this substitution successive, finally we 
$$
a_{1,0}=Q_{1}(C,D).
$$
Finally, we get the rational solution of $a_{0,0}$:
\begin{align*}
a_{0,0}^2=&\;c_0-a_{d,0}^2-a_{d-1,0}^2-\ldots-a_{1,0}^2\\
&=\;c_0-\sum_{j=1}^{d}Q_j(C,D)^2,\\
&=:{\Qo}(C,D).
\end{align*}
where
$$
C=\{a_{i,j},\;| 1\leq j<j\leq d\}, 
\quad
D=\{d_{1,1},d_{2,2},\ldots,a_{d-1,d-1}\}, 
$$
as defined in previous.  

The above analysis shows that if a semi-positive polynomial $p(x)$ of degree $2d$ can be written as a DDP-SOS in form
\begin{align}
p(x)=&\; (a_{0,0}x^d+a_{d,d-1}x^{d-1}\cdots+a_{d,0})^2\nonumber \\
&\;+(a_{1,1}x^{d-1}+\cdots+a_{d-1,0})^2+\cdots+
(a_{1,1}x+a_{1,0})^2+a_{0,0}^2,
\label{ddp-sos-a}
\end{align}
such that all $a_{i,j}$ in $D$ are positive, then we can construct 
an instance of this special kind DDP-SOS by at first solving 
$a_{i,j}$ in $C\cup D$ from the following system of inequalities
\begin{equation}
\left\{
\begin{array}{rl}
    &\;a_{1,1}> 0,\; \cdots, a_{a-1,d-1}> 0, \\[3pt]
    &\;{\Qo}(a_{1,1}, \ldots, a_{d-1,d-1}; a_{i,j}|_{1\leq j<i\leq d})>0,
\end{array}\right.
\label{the-ineqs}
\end{equation}
and then solving  $a_{i,j}$ in the set $L$ from the triangular equation system \eqref{the-eqs}.  
\medskip

To conclude this section we use the above method to seek the SOS representation of degree-strictly-descending polynomials of the following polynomial:
$$
f(x)=x^6-x^5-2x^4+x^3+x^2+1.
$$
Applying the Sturm theorem it can be easily check that the given polynomial is positive definite. 
Using software YAMILP we can find an approximate SOS representation 
$$
f=f_1^2+f_2^2+f_3^2+f_4^2
$$
where
\begin{align*}
&f_1 =-0.1398202913 + 1.4092x + 0.4843x^2 - 0.9765x^3, \\
&f_2 =-0.9584996211 + 0.2613x + 0.5896x^2 - 0.0526x^3,\\
&f_3 =-0.2470324237 + 0.2202x - 0.3213x^2 + 0.1938x^3,\\
&f_4 = 0.02652817796 + 0.0386x + 0.0532x^2 + 0.0782x^3.
\end{align*}
Our target is to find $4$ polynomials which degrees are $3,2,1,0$, respectively
so that $f(x)$ is the sum of squares of these polynomials.
We don't know whether or not the given $f(x)$ has a such DDP-SOS representation. 

In our method, we first assume that
\begin{align*}
&x^6-x^5-2x^4+x^3+x^2+1\\
=\,&
\left(x^3,x^2,x,1\right)
\left(\begin{array}{cccc}
a_{3,3}&&&\\
a_{3,2}&a_{2,2}&&\\
a_{3,1}&a_{2,1}&a_{1,1}&\\
a_{3,0}&a_{2,0}&a_{1,0}&a_{0,0}
\end{array}\right)
\left(\begin{array}{cccc}
a_{3,3}&a_{3,2}&a_{3,1}&a_{3,0}\\
&a_{2,2}&a_{2,1}&q_{2,0}\\
&&a_{1,1}&a_{1,0}\\
&&&a_{0,0}
\end{array}\right)
\left(\begin{array}{c}
x^3\\x^2\\x\\1
\end{array}\right)\\
=\,&\left(a_{3,3}x^3+a_{3,2}x^2+a_{3,1}x+a_{3,0}\right)^2\\
&\;
+\left(a_{2,2}x^2+a_{2,1}x+a_{2,0}\right)^2
+\left(a_{1,1}x+a_{1,0}\right)^2
+\left(a_{0,0}\right)^2.
\end{align*}
In this problem the undeterminate $a_{i,j}$ can be decomposed into $B,C,D$ as follows:
$$
B=\,\left(a_{3,3},a_{3,2},a_{3,1},a_{3,0}; a_{2,0},a_{1,0},a_{0,0}\right),\quad
C=\,\left(a_{2,1}\right), \quad 
D=\,\left(a_{2,2},a_{1,1}\right).
$$
Then we can construct the following triangular equations:
\begin{align}
{{\color{red} a_{3,3}}}^{2}-1=0,\nonumber \\
2\,{\color{red} a_{3,2}}\,{ a_{3,3}}+1=0,\nonumber \\
2\,{\color{red} a_{3,1}}\,{ a_{3,3}}+{{ a_{3,2}}}^{2}+{{ a_{2,2}}}^{2}+2=0,\nonumber \\
2\,{\color{red} a_{3,0}}\,{ a_{3,3}}+2\,{ a_{3,1}}\,{ a_{3,2}}+2\,{ a_{2,1}}\,{ a_{2,2}}-1=0,\nonumber \\
2\,{\color{red} a_{2,0}}\,{ a_{2,2}}+{{ a_{1,1}}}^{2}+{{ a_{2,1}}}^{2}+2\,{ a_{3,0}}\,{ a_{3,2}}+{{ a_{3,1}}}^{2}-1=0,\nonumber \\
2\,{\color{red} a_{1,0}}\,{ a_{1,1}}+2\,{ a_{2,0}}\,{ a_{2,1}}+2\,{ a_{3,0}}\,{ a_{3,1}}\phantom{space here for purpose}=0,\nonumber \\
{{\color{red} a_{0,0}}}^{2}+{{ a_{1,0}}}^{2}+{{ a_{2,0}}}^{2}+{{ a_{3,0}}}^{2}-1\phantom{please don't delete the space here}=0.\nonumber \\
\label{the-eqs7}
\end{align}
The variables printed in red belong to $B$. Taking them as the main variables. The first $6$ equations can be viewed as linear equations with respect to the main variables, therefore, we have the following solutions: 
\begin{align}
a_{3,3}\,&=1,\nonumber \\
a_{3,2}\,&=-\frac{1}{2a_{3,3}}=-\frac{1}{2},\nonumber \\
a_{3,1}\,&=-\frac{2+a_{3,2}^2+a_{2,2}^2}{2a_{3,3}}=-\frac{9+4a_{2,2}^2}{8},\nonumber \\
a_{3,0}\,&=-\frac{2a_{3,1}a_{3,2}+2a_{2,1}a_{2,2}-1}{2a_{3,3}}=-\frac{1+4a_{2,2}^2+16a_{2,1}a_{2,2}}{16},\nonumber \\
a_{2,0}\,&=-\frac{a_{1,1}^2+a_{2,1}^2+2a_{3,0}a_{3,2}+a_{3,1}^2-1}{2a_{2,2}}\nonumber \\
\,&=-\frac{16\,{{ a_{2,2}}}^{4}+64\,{{ a_{1,1}}}^{2}+64\,{{ a_{2,1}}}^{2}+64\,{ a_{2,1}}\,{ a_{2,2}}+88\,{{ a_{2,2}}}^{2}+21}
{128a_{2,2}},\nonumber \\
a_{1,0}\,&=-\frac{a_{2,0}a_{2,1}+a_{3,0}a_{3,1}}{a_{1,1}}=:A_{10}(a_{2,2},a_{1,1};a_{2,1}),
\label{soladi76} 
\end{align}
where $A_{10}$ is the following rational function:
\begin{align*}
{ A_{10}}=&\,\left(-48\,{ a_{2,1}}\,{{ a_{2,2}}}^{4}-16\,{{ a_{2,2}}}^{5}
 +64\,{{ a_{1,1}}}^{2}{ a_{2,1}}+64\,{{ a_{2,1}}}^{3}+64\,{{ a_{2,1}}}^{2}{ a_{2,2}}\right.\\
&\,\left.\phantom{x}-56\,{ a_{2,1}}\,{{ a_{2,2}}}^{2}-40\,{{ a_{2,2}}}^{3}+21\,{ a_{2,1}}-9\,{ a_{2,2}}\right)
/\left({128\,{ a_{2,2}}\,{ a_{1,1}}}\right). 
\end{align*}
Substitute the solution in \eqref{soladi76} into the last equation of the system \eqref{the-eqs7}, we get
\begin{align}
(a_{0,0})^2\,&=1-a_{1,0}^2-a_{2,0}^2-a_{3,0}^2
=\frac{P_{42}(a_{2,2},a_{1,1},a_{2,1})}{16384a_{1,1}^2a_{2,2}^2},
\label{soladi70}
\end{align}
here the polynomial $P_{42}$
is the following polynomial with $42$ monomials:
\begin{align*}
P_{42}=&-256\,{{ a_{1,1}}}^{2}{{ a_{2,2}}}^{8}-2304\,{{ a_{2,1}}}^{2}{{ a_{2,2}}}^
{8}-1536\,{ a_{2,1}}\,{{ a_{2,2}}}^{9}-256\,{{ a_{2,2}}}^{10}\\
&-2048\,{{
 a_{1,1}}}^{4}{{ a_{2,2}}}^{4}-12288\,{{ a_{1,1}}}^{2}{{ a_{2,1}}}^{2}{{
 a_{2,2}}}^{4}-8192\,{{ a_{1,1}}}^{2}{ a_{2,1}}\,{{ a_{2,2}}}^{5}\\
&-3840\,{{
 a_{1,1}}}^{2}{{ a_{2,2}}}^{6}+6144\,{{ a_{2,1}}}^{4}{{ a_{2,2}}}^{4}+8192\,{{ a_{2,1}}}^{3}{{ a_{2,2}}}^{5}-3328\,{{ a_{2,1}}}^{2}{{ a_{2,2}}}^{6}\\
&-5632\,{ a_{2,1}}\,{{ a_{2,2}}}^{7}-1280\,{{ a_{2,2}}}^{8}-4096\,{{ a_{1,1}
}}^{6}-12288\,{{ a_{1,1}}}^{4}{{ a_{2,1}}}^{2}\\
&-8192\,{{ a_{1,1}}}^{4}{
 a_{2,1}}\,{ a_{2,2}}-11264\,{{ a_{1,1}}}^{4}{{ a_{2,2}}}^{2}-12288\,{{
 a_{1,1}}}^{2}{{ a_{2,1}}}^{4}\\
&-16384\,{{ a_{1,1}}}^{2}{{ a_{2,1}}}^{3}{
 a_{2,2}}-8192\,{{ a_{1,1}}}^{2}{{ a_{2,1}}}^{2}{{ a_{2,2}}}^{2}-8192\,{{
 a_{1,1}}}^{2}{ a_{2,1}}\,{{ a_{2,2}}}^{3}\\
&-8928\,{{ a_{1,1}}}^{2}{{ a_{2,2}
}}^{4}-4096\,{{ a_{2,1}}}^{6}-8192\,{{ a_{2,1}}}^{5}{ a_{2,2}}+3072\,{{
 a_{2,1}}}^{4}{{ a_{2,2}}}^{2}\\
&+12288\,{{ a_{2,1}}}^{3}{{ a_{2,2}}}^{3}+
4000\,{{ a_{2,1}}}^{2}{{ a_{2,2}}}^{4}-4672\,{ a_{2,1}}\,{{ a_{2,2}}}^{5}-
1888\,{{ a_{2,2}}}^{6}\\
&-2688\,{{ a_{1,1}}}^{4}-5376\,{{ a_{1,1}}}^{2}{{
 a_{2,1}}}^{2}-1536\,{{ a_{1,1}}}^{2}{ a_{2,1}}\,{ a_{2,2}}+12624\,{{ a_{2,2}}}^{2}{{ a_{1,1}}}^{2}\\
&-2688\,{{ a_{2,1}}}^{4}-1536\,{{ a_{2,1}}}^{3}{
 a_{2,2}}+3504\,{{ a_{2,1}}}^{2}{{ a_{2,2}}}^{2}+672\,{ a_{2,1}}\,{{ a_{2,2}}}^{3}\\
&-720\,{{ a_{2,2}}}^{4}-441\,{{ a_{1,1}}}^{2}-441\,{{ a_{2,1}}}^{
2}+378\,{ a_{2,1}}\,{ a_{2,2}}-81\,{{ a_{2,2}}}^{2}
\end{align*}
In the final step, we need to solve the 
inequality system formed by 
\begin{equation}
a_{2,2}>0,\quad a_{1,1}>0, \quad P_{42}(a_{2,2},a_{1,1},a_{2,1})>0,
\label{semi-algebraic-system-42}
\end{equation}
In general, this kind of inequalities can be solved by the cylindrical algebraic decomposition (CAD) method or PCAD (Partial CAD) method, the algorithms have been implemented in various numerical computation software and computer algebra software, see for example \cite{Gathen2003,Grigoriev1988,Khachiyan1997,lihongzhi2010,xia2016}. We will not explain the details here. 
Here we just use a practical method to solve the specific system~\eqref{semi-algebraic-system-42}. Namely, we set the core variables in the set $C$ to be zero, i.e., $a_{2,1}=0$,  and try to find solution of the following simpler inequality
\begin{align*}
P_{14}=&
-441\,{a_{1,1}}^{2}-81\,{a_{2,2}}^{2}
+12624\,{a_{2,2}}^{2}{a_{1,1}}^{2}-720\,{a_{2,2}}^{4}-2688\,{a_{1,1}}^{4}\\
&-4096\,{a_{1,1}}^{6}-11264\,{a_{1,1}}^{4}{a_{2,2}}^{2}-8928\,{a_{1,1}}^{2}{a_{2,2}}^{4}-1888\,{a_{2,2}}^{6}-1280\,{a_{2,2}}^{8}\\
&-2048\,{a_{1,1}}^{4}{a_{2,2}}^{4}-3840\,{a_{1,1}}^{2}{a_{2,2}}^{6}
-256\,{a_{1,1}}^{2}{a_{2,2}}^{8}-256\,{a_{2,2}}^{10}
>0.
\end{align*}
Notice that $P_{14}$  is a degree-10 bi-variate polynomial with 14 monomials, and only one among them has positive coefficient. So we set  $a_{1,1}=a_{2,2}=t$, where $t$ is a positive real number. Then inequality
$P_{14}>0$ becomes the following one:
$$
g(t):=-512t^{10}-7168t^8-26176t^6+9216t^4+522t^2>0.
$$
Now it is easy to see $g(1/2)=8>0$. 
Thus, we have found a solution of 
\eqref{semi-algebraic-system-42} as follows:
$$
C^*=\left(a_{2,1}=0\right), \quad 
D^*=\left(a_{2,2}^*={1}/{2}, a_{1,1}^*={1}/{2}\right).
$$
Substitute this to \eqref{soladi76} and \eqref{soladi70}, we obtain
\begin{align*}
B^*=&\; \left(a_{3,3}=1, \, a_{3,2}=-{1}/{2}, \, a_{3,1}=-{5}/{4}, \, a_{3,0}=-{1}/{8}; \; \right.\\
&\phantom{xxx}\left. a_{2,0}=-{15}/{16}, \, a_{1,0}=-{5}/{16}, \, a_{0,0}={1}/{128}
\right).
\end{align*}
This yields the following DDP-SOS of given $f(x)$: 
$$
p(x)=\left(x^3-\frac{1}{2}x^2-\frac{5}{4}x-\frac{1}{8}\right)^2+
\left(\frac{1}{2}x^2-\frac{15}{16}\right)^2+
\left(\frac{1}{2}x-\frac{5}{16}\right)^2+
\frac{1}{128}.
$$

The algorithm can be used to reconstruct the DDP SOS of semi-definite polynomial. For example, following this method, we obtain the following DDP-SOS: 
\[
x^6+x^2 = \left( x^3- \frac{1}{2}x \right)^2+x^4+\frac{3}{4}\,x^2.
\]

The Algorithm can also used to polynomials which coefficients contain parameters. For example, for the polynomial 
$p(x)=7x^6-a\,x+3$, we can get
$$
7x^6\!-\!ax\!+\!3=7\left[\left( {x}^{3}\!-\!\frac{x}{2} \right) ^{2}\!+\! \left( {x}^{2}\!-\!{\frac{1}{2}}
 \right) ^{2}\!+\! 3\left( \frac{x}{2}\!-\!\frac{a}{21} \right) ^{2}\right]\!+\!\frac{105-4a^2}{84}.
$$
which implies that $105-4a^2>0$, the polynomial is positive definite.

\section{SOS of Rational Coefficients and the Cholesky Factorization}
\label{rational-SOS}

In this section, we first prove a theorem concerning the SOS (Sum of Squares) of rational coefficients, and then discuss the potential applications of the Cholesky factorization for finding such representations. The theorem is stated as follows.

\begin{theorem}
Every positive definite polynomial \( p(x) \) of degree \( 2d \) with rational coefficients can be expressed as the sum of one positive rational number and \( d \) squares of polynomials with rational coefficients, where the degrees of these polynomials strictly decrease from \( d \) to 1.
\label{thm-3}
\end{theorem}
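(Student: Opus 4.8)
The plan is to build the decomposition from the triangular coefficient system \eqref{the-eqs} and the feasibility analysis of Section~\ref{algorithm}, and then to extract a \emph{rational} point from the (real) solution region. Writing the target in the matrix form $p(x) = (x^d,\ldots,1)\,L L^T\,(x^d,\ldots,1)^T$, I would keep the core entries $C$ and diagonal entries $D$ as free parameters and solve the border entries from the triangular system. The key observation is one of rationality propagation: since $p(x)$ has rational coefficients, every $c_i\in\mathbb{Q}$, and the explicit solution formulas of Section~\ref{algorithm} express each border entry as a rational function of $C\cup D$ with \emph{rational} coefficients, the only denominators being the diagonal entries $a_{j,j}$ (with $a_{d,d}=1$). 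In particular the constant term equals $a_{0,0}^2=\Qo(C,D)$, again a rational function of $C,D$ with rational coefficients.

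The heart of the argument is a density statement about the feasibility region
\[
\mathcal{R}=\{\,(C,D): a_{1,1}>0,\ \ldots,\ a_{d-1,d-1}>0,\ \Qo(C,D)>0\,\},
\]
cut out by the strict inequalities \eqref{the-ineqs}. Because $p(x)$ is positive definite, Theorem~\ref{thm-2} yields a genuine DDP-SOS of $p$ whose diagonal leading coefficients are strictly positive and whose constant term $b_1\cdots b_d$ is strictly positive; the parameter values this decomposition assigns to $C$ and $D$ therefore form a real point of $\mathcal{R}$, so $\mathcal{R}\neq\varnothing$. Moreover $\mathcal{R}$ is open, being defined by finitely many strict inequalities among functions continuous on the region where all $a_{j,j}>0$. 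A nonempty open subset of $\mathbb{R}^{|C|+|D|}$ must contain a point with rational coordinates, by density of $\mathbb{Q}$ in $\mathbb{R}$; I would pick such a rational point $(C^*,D^*)\in\mathcal{R}$.

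Evaluating the border formulas at $(C^*,D^*)$ then finishes the proof: each denominator $a_{j,j}$ is a positive rational, so every border entry is a well-defined rational number; each leading coefficient $a_{k,k}$ for $1\le k\le d$ is nonzero (indeed $a_{d,d}=1$ and $a_{k,k}>0$ for $1\le k\le d-1$), so $q_k(x)$ has degree exactly $k$ and the degrees descend strictly from $d$ to $1$; and the constant $\Qo(C^*,D^*)>0$ is a positive rational number. This is exactly the representation asserted. The step I expect to be the main obstacle is precisely the feasibility/density argument: one must confirm that the admissible region is not merely nonempty but full-dimensional, so that the strict inequalities survive perturbation to a rational point---nonemptiness is delivered by Theorem~\ref{thm-2} and openness by the strictness of the inequalities, and together they remove the obstruction.
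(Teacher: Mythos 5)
Your proposal is correct and follows essentially the same route as the paper's own proof: solve the border entries of $L$ from the triangular system as rational functions of the core and diagonal parameters, use Theorem~\ref{thm-2} to show the feasibility region of \eqref{the-ineqs} is nonempty, note it is open since it is cut out by strict inequalities, pick a rational point by density of $\mathbb{Q}$, and conclude rationality of all entries. Your write-up is in fact somewhat more careful than the paper's (explicitly verifying that the Theorem~\ref{thm-2} decomposition furnishes a point of the region, and that the only denominators are the positive diagonal entries), but the underlying argument is identical.
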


\begin{proof}
    According to the algorithm presented in the previous section, given any solution \((C^{\ast}, D^{\ast})\) of the diagonal and core variables in the inequality system \eqref{the-ineqs}, we can compute the border variables \( B^{\ast} \), thereby constructing an SOS of degree-strictly-descending polynomials. By Theorem~\ref{thm-2}, the set of \((C, D)\) solutions is non-empty. Therefore, the solution set of the inequality system \eqref{the-ineqs} forms a non-empty open set in the space \( \mathbb{R}^{d-1}_{>0} \times \mathbb{R}^{(d-1)(d-2)/2} \). By the density of rational numbers, we can always find a rational solution \( (D^{\ast}, V^{\ast}) \) within this set. Since in the algorithm the variables \( a_{0,0}^2 \) and \( a_{i,j} \) in \( B^{\ast} \setminus \{a_{0,0}\} \) are determined by rational functions of \( C^{\ast}, D^{\ast} \), and the coefficients of \( p(x) \), we ultimately obtain a rational lower triangular matrix \( L \), which leads to an SOS representation 
    in form~\eqref{ddp-sos-a}
where \( a_{0,0}^2 \in \mathbb{Q} \) and \( a_{i,j} \in \mathbb{Q} \) for all \( (i,j) \neq (0,0) \).
Theorem~\ref{thm-3} is therefore proved. 
\qed 
\end{proof}

It is worth noting that, given any polynomial \( p(x) \) of degree \( 2d \):
$$
p(x) = x^{2d} + c_{2d-1} x^{2d-1} + \dots + c_1 x + c_0,
$$
the symmetric matrix \( B \) of order \( (d+1) \):
$$
B = \begin{pmatrix}
b_{d,d} & b_{d-1,d} & \cdots & b_{1,d} & d_{0,d} \\
b_{d,d-1} & b_{d-1,d-1} & \cdots & b_{1,d-1} & d_{0,d-1} \\
\vdots & \vdots & \ddots & \vdots & \vdots \\
b_{d,1} & b_{d-1,1} & \cdots & d_{1,1} & b_{0,1} \\
b_{d,0} & b_{d-1,0} & \cdots & d_{1,0} & b_{0,0}
\end{pmatrix}
$$
satisfies the equality
$$
p(x) = (x^d, x^{d-1}, \ldots, x, 1) B (x^d, x^{d-1}, \ldots, x, 1)^T = X B X^T,
$$
and forms an algebraic surface in \( \mathbb{R}^{d(d+1)/2} \) of co-dimension \( 2d+1 \). 
Clearly, the positive definiteness of \( p(x) \) does not necessarily imply that \( XBY^T > 0 \) for all
$$
X = (x^d, x^{d-1}, \ldots, x, 1), \quad Y = (y^d, y^{d-1}, \ldots, y, 1).
$$
However, if a symmetric matrix \( B \) satisfying \( p(x) = XBX^T \) happens to be positive semidefinite, then the Cholesky factorization algorithm (see \cite{Gathen2003,Horn1985}) can be used to construct a unique lower-triangular matrix \( L \) with positive diagonal entries. Therefore, we can obtain an SOS of \( p(x) \) with degree-strictly-descending polynomials.

\section{Examples of DDP SOS of rational coefficients}

{\bf Example 1}.
The following SOS representation was given in \cite{Menini2015}.
\begin{flalign}\label{eq-00007-0}
&
~~~~{x}^{4}+2\,{x}^{3}-18\,{x}^{2}-12\,x+117\notag\\
&~~~~~~~~= \left( {x}^{2}+x-\frac{217046}{21315} \right)
^{2}+ \frac{29107}{21315}\left( x+\frac{89156}{29107} \right) ^{2}+\frac{6597622612523}{13224160752075}.
&
\end{flalign}
We apply our method to reconstruct SOS of degree descending polynomials. Assume that
$$
{x}^{4}+2\,{x}^{3}-18\,{x}^{2}-12\,x+117
=\left(x^2+a_{2,1}x+a_{2,0}\right)^2+\left(a_{1,1}x+a_{1,0}\right)^2+\left(a_{0,0}\right)^2,
$$
Then we have:
\begin{align*}
-2+2\,{a_{2,1}}=0,\\
{{a_{1,1}}}^{2}+{{a_{2,1}}}^{2}+2\,{a_{2,0}}+18=0,\\
2\,{a_{1,0}}\,{a_{1,1}}+2\,{a_{2,0}}\,{a_{2,1}}+12=0,\\
{{a_{0,0}}}^{2}+{{a_{1,0}}}^{2}+{{a_{2,0}}}^{2}-117=0.
\end{align*}
Solve the above equations, we obtain the following solution:
$$
a_{2,1}=1,\quad   {a_{2,0}}=-\frac{1}{2}(a_{1,1}^2+19), \quad a_{1,0}=\frac{a_{1,1}^2+7}{2a_{1,1}}.
$$
substitute to ${\Qo}(C,D)$ we get the following inequality system:
$$
a_{0,0}^2=-\frac{a_{1,1}^6+39a_{1,1}^4-93a_{1,1}^2+49}{4a_{1,1}^2}>0, \quad a_{1,1}>0.
$$
Obviously, we can take $a_{1,1}^*=1$. ´and therefore,
$$
a_{2,2}^*=1, \;
a_{2,1}^*=1, \;
a_{2,0}^*=-10;\quad
a_{1,0}^*=4;\quad
a_{0,0}^*=1.
$$
This leads to the following SOS representation:
\begin{align}\label{eq-00008-0}
&
~~~~{x}^{4}+2\,{x}^{3}-18\,{x}^{2}-12\,x+117= \left( {x}^{2}+x-10 \right)
^{2}+ \left( x+4 \right) ^{2}+1.
&
\end{align}
Using YAMILP we can obtain the following result:
$$
{x}^{4}+2\,{x}^{3}-18\,{x}^{2}-12\,x+117
=f_1^2+f_2^2+f_3^2,
$$
where
\begin{align*}
&f_1 = -10.81617267 + 0.5677  x + 0.9384  x^2,\\
&f_2 = -0.1019059252 - 1.3809 x - 0.3392  x^2,\\
&f_3 = 0.004888292138 - 0.0167 x + 0.0664  x^2.
\end{align*}

{\bf Example 2. } 
{This problem is from \cite{Feng2020}:
$$
{x}^{6}-2\,{x}^{5}+4\,{x}^{4}-6\,{x}^{3}+6\,{x}^{2}-4\,x+2
$$

Using our method, we set
$$
D=\left(a_{2,2},a_{1,1}\right), \quad C=\left(a_{2,1}\right).
$$
Then ${\Qo}(C,D)$ is a rational function of $a_{2,2},a_{1,1},a_{2,1}$, for saving place we omit here. We search the solution of the inequalities 
$$
S=\{a_{2,2}>0, a_{1,1}>0, {\Qo}(a_{2,2},a_{1,1},a_{2,1})>0\}
$$
from the following grids:
$$
\left\{1,2,3,4\right\}\times \left\{1,2,3,4\right\}
\times \left\{0,1,-1,2,-2, 3,-3, 4,-4\right\}.
$$
We obtain the following solution:
$$
a_{2,2}^*=1, \quad a_{1,1}^*=1; \quad a_{2,1}^*=-1,
$$
and therefore,
$$
a_{3,2}^*=-1, \;
a_{3,1}^*=1, \;
a_{3,0}^*=-1, \;
a_{2,0}^*=\frac{1}{2}, \;
a_{1,0}^*=-\frac{1}{2}, \;
a_{0,0}^*=\frac{\sqrt{2}}{2}.
$$
This immediately leads to the following SOS representation: 
\begin{align}\label{eq-00008-0-20}
&
~~~~{x}^{6}-2\,{x}^{5}+4\,{x}^{4}-6\,{x}^{3}+6\,{x}^{2}-4\,x+2\notag\\
&~~~~~~~~= \left( {x}
^{3}-{x}^{2}+x-1 \right) ^{2}+ \left( {x}^{2}-x+\frac{1}{2} \right) ^{2}+
 \left( x-\frac{1}{2} \right) ^{2}+\frac{1}{2}.
&
\end{align}
Using YAMILP we can represent the given polynomial as $f_1^2+f_2^2+f_3^2+f_4^2$, where
\begin{align*}
&f_1 = -0.8432833022 + 2.1732  x - 1.5698  x^2+ 0.2905  x^3,\\
&f_2 = -0.8665507045 + 0.5456  x + 1.1222  x^2- 0.5327  x^3,\\
&f_3 = -0.7043361998 - 0.3787  x - 0.0093  x^2+ 0.7382  x^3,\\
&f_4 = 0.2046305611 + 0.1889  x + 0.2061  x^2+ 0.2947  x^3.
\end{align*}

{\bf Example 3. } Find SOS representation of the following polynomial:
$$
{x}^{6}-6\,{x}^{5}+14\,{x}^{4}-18\,{x}^{3}+17\,{x}^{2}-12\,x+4.
$$

This is a semi-positive definite polynomial. In fact, we have
 $${x}^{6}-6\,{x}^{5}+14\,{x}^{4}-18\,{x}^{3}+17\,{x}^{2}-12\,x+4=\left ({x}^{2}+1\right )\left (x-1\right )^{2}\left (x-2\right )^{2}.$$
Applying our method, we can obtain the following solution:
\begin{align}\label{008}
&
~~~~{x}^{6}-6{x}^{5}+14{x}^{4}-18{x}^{3}+17{x}^{2}-12x+4\nonumber\\
&=
\left ({x}^{3}-3{x}^{2}+2x\right )^{2}+\left ({x}^{2}-3x+2
\right )^{2}.
&
\end{align}

YAMILP expresses the polynomials into the sum of squares of the following four polynomials:
\begin{align*}
&f_1 = -1.414253975\! +\! 3.5355  x\! -\! 2.8283  x^2\!+\! 0.7072  x^3,\\
&f_2 = -1.414173135\! +\! 0.7070  x\! +\! 1.4142  x^2\!-\! 0.7070  x^3,\\
&f_3 = -0.000247577789\! -\! 1.8127\!\times\! 10^{-4}  x \!-\! 4.8614\!\times\! 10^{-5}  x^2\!+\! 2.1669\!\times\! 10^{-4}  x^3,\\
&f_4 = 3.717033734\times 10^{-5} \!+\! 4.5315\!\times\! 10^{-5}  x\! +\! 6.1611\!\times\! 10^{- 5}  x^2\!+\! 9.4200\times 10^{-5}  x^3.
\end{align*}

{\bf Example 4. } Find the DDP SOS of  
$$
(x^2+1)(x-1)^2(x-2)^2+1={x}^{6}-6\,{x}^{5}+14\,{x}^{4}-18\,{x}^{3}+17\,{x}^{2}-12\,x+5.
$$

Applying our method, we obtain the following solutions:
\begin{align}\label{eq-special-1}
&
~~~~(x^2+1)(x-1)^2(x-2)^2+1,\notag\\
&~~~~~~~~=x^6-6x^5+14x^4-18x^3+17x^2-12x+5,\notag\\
&~~~~~~~~=
\left ({x}^{3}-3\,{x}^{2}+2\,x\right )^{2}+\left ({x}^{2}-3\,x+2
\right )^{2}+1,\\
&~~~~~~~~=\left( {x}^{3}-x \right) ^{2}+ \left( {x}^{2}-{\frac{9}{8}} \right) ^
{2}+\frac{1}{4}\,{x}^{2}+{\frac{47}{64}},\\
&~~~~~~~~ = \left(x^3-3x^2+2x\right)^2+\left(x^2{-3}x+\frac{3}{2}\right)^2+\left(x-\frac{3}{2}\right)^2+\frac{1}{2}, &
\\
&~~~~~~~~=\left (
{x}^{3}-3\,{x}^{2}+{\frac {19}{8}}\,x-{\frac {7}{8}}\right )^{2}+
\left (\frac{1}{2}\,{x}^{2}-2\,x+{\frac {119}{64}}\right )^{2}\nonumber \\
&~~~~~~~~\phantom{=x}+\left (\frac{1}{2}\,x-{
\frac {13}{32}}\right )^{2}+{\frac {2507}{4096}}.
\end{align}

{\bf Example 5. }
Find DDP-SOS of ${x}^{6}+1$.
In this example, we have obtained 
$$
L=\left(\begin{array}{cccc}
1&&&\\
0&1/2&&\\
-1/8&0&1/2&\\
0&-17/64&0&a_{0,0}
\end{array}\right),
\quad
a_{0,0}^2=\frac{3807}{4096},
$$
which leads to:
\begin{align}\label{eq-00008-0-01}
&
~~~~{x}^{6}+1= \left( {x}^{3}-\frac{1}{8}x \right) ^{2}+ \left( \frac{1}{2}\,{x}^{2}-{\frac
{17}{64}} \right) ^{2}+\frac{1}{4}\,{x}^{2}+{\frac{3807}{4096}}.
&
\end{align}
Using YAMILP we can get $x^6+1=f_1^2+f_2^2+f_3^2+f_4^2$, where:
\begin{align*}
&f_1 = -0.7486  x + 0.8923  x^3,\\
&f_2 = -0.892288357 + 0.7486  x^2,\\
&f_3 = 0.4514659322 + 0.5382  x^2,\\
&f_4 = 0.5382  x + 0.4515  x^3.
\end{align*}


{\bf Example 6}: A sparse polynomial ${x}^{10}-x+1$. 
We search following lower-triangular matrix such that $p(x)=XLL^TX^T$ for $X=(x^5,\ldots,x,1)$:  
$$
L=\left(\begin{array}{cccccc}
{\color{red}a_{5,5}}&&&&&\\
{\color{red}a_{5,4}}&{\color{blue}a_{4,4}}&&&&\\
{\color{red}a_{5,3}}&a_{4,3}&{\color{blue}a_{3,3}}&&&\\
{\color{red}a_{5,2}}&a_{4,2}&a_{3,2}&{\color{blue}a_{2,2}}&&\\
{\color{red}a_{5,1}}&a_{4,1}&a_{3,1}&a_{2,1}&{\color{blue}a_{1,1}}&\\
{\color{red}a_{5,0}}&{\color{red}a_{4,0}}&{\color{red}a_{3,0}}&{\color{red}a_{2,0}}&{\color{red}a_{1,0}}&{\color{red}a_{0,0}}
\end{array}\right).
$$
We divide the parameters into following three groups:
\begin{align*}
&B=\left(
a_{5,5},\,a_{5,4},\,a_{5,3},\,a_{5,2},\,a_{5,1},\,a_{5,0},\,
a_{4,0},\,a_{3,0},\,a_{2,0},\,a_{1,0},\,a_{0,0}\right), \\
&D=\left(a_{4,4},\,a_{3,3},\,a_{2,2},\,a_{1,1}\right),
\quad
C=\left(a_{4,3},\,a_{4,2},\,a_{4,1},\,a_{3,2},\,a_{3,1},\,a_{2,1}\right).
\end{align*}

The inequality ${\Qo}(C,D)>0$ involves $10$ variables, therefore, it's difficult to find solutions with the cylindrical algebraic decomposition. 
We set variables in $C$ to zero:
$$
C^{\ast}:=\left(a_{4,3}^*=a_{4,2}^*=a_{4,1}^*=a_{3,2}^*=a_{3,1}^*=a_{2,1}^*=0\right),
$$
and seek solutions for variables in 
$D=\left(a_{4,4},a_{3,3},a_{2,2},a_{1,1}\right)$  
in the following grid
$$
\left\{\frac{1}{2},1,\frac{3}{2},2,\frac{5}{2},3,\frac{7}{2},4\right\}^4.
$$
Finally, we found the following solution of 
${\Qo}(C^{\ast},D)>0, D>0$:
$$
D^{\ast}=\left(a_{4,4}^*=1,\; a_{3,3}^*=\frac{1}{2},\; a_{2,2}^*=1,\;a_{1,1}^*=1\right),
$$
Substitute the solutions to $p(x)=XLL^TX^T$, we solve border variables $a_{i,j}\in B$ and obtain the following sum of squares representation: 
\begin{align}\label{eq-0000002}
&
~~~~{x}^{10}-x+1\notag\\
&~~~~~~~~= \left( {x}^{5}\!-\!\frac{1}{2}{x}^{3}\!-\!\frac{1}{4}x \right) ^{2}\!+\! \left( {x}
^{4}\!-\!\frac{5}{8} \right) ^{2}\!+\!\frac{1}{4}{x}^{6}\!\notag\\
&~~~~~~~~=+\! \left( {x}^{2}\!-\!{\frac{17}{32}}
 \right) ^{2}\!+\! \left( x\!-\!\frac{1}{2} \right) ^{2}\!+\!{\frac{79}{1024}}.
&
\end{align}
Without pre-assume $C=C^{\ast}$, we cam also search solution of ${\Qo}(C,D)>0$ directly from the following grid:
$$
\left\{\frac{1}{2},1,\frac{3}{2},2,\frac{5}{2},3,\frac{7}{2},4\right\}^4\times
\left\{0,\frac{1}{2},-\frac{1}{2},1,-1,\frac{3}{2},-\frac{3}{2},2,-2\right\}^6,
$$
by testing ${\Qo}(C,D)>0$ for the $8^4\times 9^6=2,176,782,336$ points in this grid. Following is a sample of solutions: 
$$
D^*=\left(1,\,1,\,\frac{1}{2},\,\frac{1}{2}\right), \quad
V^*=\left(-1,\,0,\,\frac{1}{2},\,\frac{1}{2},\,0,\,\frac{1}{2}\right).
$$
which leads to the following SOS representation:
\begin{align}\label{eq-0000001}
&
~~~~{x}^{10}-x+1\notag\\
&~~~~~~~~= \left( {x}^{5}-\frac{1}{2}\,{x}^{3}+{x}^{2}-\frac{3}{4}\,x-\frac{1}{4} \right) ^
{2}+ \left( {x}^{4}-{x}^{3}+\frac{1}{2}x-\frac{5}{8} \right) ^{2}\notag\\
&~~~~~~~~~~~~+ \left( \frac{1}{2}\,{x}^{3}+
\frac{1}{2}\,{x}^{2}-\frac{1}{2} \right) ^{2}+ \left( \frac{1}{2}\,{x}^{2}+\frac{1}{2}x-{\frac{5}{16}}
 \right) ^{2} \notag\\
 &~~~~~~~~~~~~+\left( \frac{1}{2}x-{\frac{7}{16}} \right) ^{2}+{\frac{1}{128}}.
&
\end{align}
Using YAMILP to compute the SOS for $x^{10}-x+1$, we can obtain the following result: 
$$
{x}^{10}-x+1=f_1^2+f_2^2+\ldots+f_6^2,
$$
where
\begin{align*}
&f_1 = -0.9055741018 + 0.7080  x + 0.5391  x^2 - 0.0840  x^3 + 0.1377  x^4 - 0.2505  x^5,\\
&f_2 = -0.0396579101 - 0.0059  x + 0.2270  x^2 - 0.7687  x^3 - 0.0653  x^4 + 0.8370  x^5,\\
&f_3 = 0.01254393117 - 0.3718  x + 0.6945  x^2 + 0.2441  x^3 - 0.5994  x^4 - 0.0130  x^5,\\
&f_4 = -0.3443097162 - 0.4303  x + 0.0828  x^2 + 0.3887  x^3 + 0.5062  x^4 + 0.3547  x^5,\\
&f_5 = 0.1529814933 - 0.2219  x + 0.2666  x^2 - 0.3059  x^3 + 0.3321  x^4 - 0.3216  x^5,\\
&f_6 = -0.190402017 - 0.1651  x - 0.1344  x^2 - 0.1129  x^3 - 0.1014  x^4 - 0.0853  x^5.
\end{align*}
\smallskip

{\bf Example 7. } The polynomial $p(x)={x}^{6}-2\,{x}^{5}+5{x}^{2}-4\,x+1$
can be expressed by the following SOS forms:
\begin{align}
&{x}^{6}-2\,{x}^{5}+5\,{x}^{2}-4\,x+1\nonumber\\
=& \left( {x}^{3}-{x}^{2}-x+\frac{1}{2}
 \right) ^{2}+ \left( {x}^{2}-\frac{3}{2}\,x+\frac{1}{4} \right) ^{2}+ \left( \frac{3}{2}\,x-\frac{3}{4} \right) ^{2}+\frac{1}{8}
 \label{eq-poly-yang-02}
\\
=& \left( {x}^{3}-{x}^{2}-{\frac {
37\,x}{25}}+{\frac{9}{10}} \right) ^{2}+ \left( 7/5\,{x}^{2}-{\frac {
17\,x}{10}}+{\frac{2699}{7000}} \right) ^{2}\nonumber\\
&+ \left( 4/5\,x-{\frac{877
}{56000}} \right) ^{2}+{\frac{128856407}{3136000000}}.
\end{align}
Below we show that $p(x)$ has no SOS representation with $C=0$ in its lower-triangular matrix.
Assume that $p(x)=XLL^TX^T$, where 
$$
L=\left(\begin{array}{cccc}
1&&&\\
a_{3,2}&s&&\\
a_{3,1}&0&t&\\
a_{3,0}&a_{2,0}&a_{1,0}&a_{0,0}
\end{array}\right).
$$
Let 
$$
L=\left(a_{3,3},a_{3,2},a_{3,1},a_{3,0},a_{2,0},a_{1,0},a_{0,0}\right),
$$
and
$$
D=\left(a_{2,2}=s,a_{1,1}=t\right),
\quad
V=\{a_{2,1}=0\}.
$$
Then, we can construct the following inequality:
\begin{align*}
{\Qo}(s,t) &= (4\,{s}^{10}+ \left( {t}^{2}+16 \right) {s}^{8}+ \left( 28\,{t}^{2}+
88 \right) {s}^{6}+ \left( 8\,{t}^{4}+38\,{t}^{2}+144 \right) {s}^{4} \\
& + \left( 48\,{t}^{4}-228\,{t}^{2}+324 \right) {s}^{2}+(16\,{t}^{6}-120
\,{t}^{4}+225\,{t}^{2}))/(-64s^2t^2)> 0.
\end{align*}
Notice that for for all $s,t\in \mathbb{R}$, 
$$
48\,{t}^{4}-228\,{t}^{2}+324>0, \quad 16\,{t}^{6}-120\,{t}^{4}+225\,{t}^{2}={t}^{2} \left( 4\,{t}^{2}-15
 \right) ^{2}\ge 0,
$$
and therefore ${\Qo}(s,t)> 0$ has no solution. which 
means that $p(x)$ has no SOS representation in form
$$
\left(x^4+a_{3,2}x^2+a_{3,1}x+a_{3,0}\right)^2
+\left(s\cdot x^2+a_{2,0}\right)^2
+\left(t\cdot x+a_{1,0}\right)^2
+a_{0,0}^2
$$
with $s,t,a_{0,0}>0$. 
Following is the SOS representation obtained by YAMILP: 
$$
{x}^{6}-2\,{x}^{5}+5\,{x}^{2}-4\,x+1
=f_1^2+f_2^2+f_3^2+f_4^2,
$$
where
\begin{align*}
&f1 = -0.9043450932 + 2.2666  x - 0.0891  x^2 - 0.5546  x^3,\\
&f2 = 0.07883288347 + 0.1529  x - 1.5081  x^2 + 0.7385  x^3,\\
&f3 = 0.3969217596 + 0.0644  x - 0.1490  x^2 - 0.3600  x^3,\\
&f4 = 0.1356408701 + 0.0896  x + 0.0809  x^2 + 0.1321  x^3.
\end{align*}

{\bf Example 8. } Consider the following problem:
\begin{align*}
p(x)=&\;2\,{x}^{16}-4\,{x}^{15}+6\,{x}^{14}-4\,{x}^{13}+2\,{x}^{12}\\
&\;-{x}^{5}+7
\,{x}^{4}-13\,{x}^{3}+17\,{x}^{2}-11\,x+5
\end{align*}
Notice that 
$$
p(x)=\left( 2\,{x}^{12}-x+5 \right)  \left( {x}^{2}-x+1 \right) ^{2},
$$
and for $2x^{12} - x + 5$,
the lower triangle $L$ contains $2\times 6+1$ border variables in $B$, $1+2+3+4=10$ core variables in $C$, and $5$ diagonal variables in $D$, the inequality system 
$$D>0,\quad  {\Qo}(C,D)>0,$$ 
involves $10+5=15$ variables. First we set $C=0$ and search $D$ in the following grids:
\[
{\mathfrak G}_1 = \left\{ \frac{1}{2},\; 1,\; \frac{3}{2},\; 2 \right\}^5.
\]
Since there are only $4^5=1,024$ points in this grid, it is easy to check the inequality ${\Qo}(C=0,D)>0$ over all points. We found the following two solutions in this grid:
\[
D_1 = \left( \frac{1}{2},\; \frac{1}{2},\; 1,\; \frac{1}{2},\; \frac{1}{2} \right), \quad
D_2 = \left( 1,\; \frac{1}{2},\; 1,\; \frac{1}{2},\; \frac{1}{2} \right),
\]
which yield the following two DDP SOS representations: 

\begin{align}
&
~~~~{x}^{12}-x/2+5/2\notag\\
&~~~~~~~~= \left( {x}^{6}-1/8\,{x}^{4}-{\frac {17\,{x}^{2}}{128
}}-{\frac{529}{1024}} \right) ^{2}+1/4\,{x}^{10}+ \left( 1/2\,{x}^{4}-
{\frac{6501}{16384}} \right) ^{2}\notag\\
&~~~~~~~~~~~~~~+{x}^{6}+ \left( 1/2\,{x}^{2}-{\frac{
25377}{65536}} \right) ^{2}+ \left( x/2-1/2 \right) ^{2}+{\frac{
7197247535}{4294967296}},\label{eq-poly-yang-01}
\\
&~~~~~~~~= \left( {x}^{6}-1/2\,{x}^{4}-1/4\,{x}^{2}-5/8
 \right) ^{2}+{x}^{10}+ \left( 1/2\,{x}^{4}-{\frac{15}{16}}
 \right) ^{2}\notag\\
 &~~~~~~~~~~~~~~+{x}^{6}+ \left( 1/2\,{x}^{2}-{\frac{9}{16}}
 \right) ^{2}+ \left( x/2-1/2 \right) ^{2}+{\frac{85}{128}}.\label{eq-poly-yang-02a}
\end{align}
The expression \eqref{eq-poly-yang-02a} corresponds to the following lower-triangular matrix with zero core variables:
$$
L=\left(\begin{array}{ccccccc}
{\color{red}1}&&&&&&\\
{\color{red}0}&{\color{blue}1}&&&&\\
{\color{red}-1/2}&0&{\color{blue}1/2}&&&\\
{\color{red}0}&0&0&{\color{blue}1}&&\\
{\color{red}-1/4}&0&0&0&{\color{blue}1/2}&&\\
{\color{red}0}&0&0&0&0&{\color{blue}1/2}&\\
{\color{red}-5/8}&{\color{red}0}&{\color{red}-15/16}&{\color{red}0}&{\color{red}-9/16}&{\color{red}-1/2}&{\color{red}85/128}
\end{array}\right)
$$

Without assuming that $C=0$, we further seek solutions of ${\Qo}(C,D)>0$ from the larger grid \( \mathfrak{G}_2\subset \mathbb{R}^{5+10}\): 
\[
{\mathfrak G}_2 = \left\{ \frac{1}{2},\; 1,\; \frac{3}{2},\; 2 \right\}^5 \times
\left\{ 0,\; \frac{1}{2},\; -\frac{1}{2},\; 1,\; -1 \right\}^{10},
\]
This grid contains \( 4^5 \times 5^{10} = 10^{10} \) points. We have used the Monte Carlo method to test ${\Qo}(C,D)>0$ and obtained the following two solutions:
\[
(D_3;V_3) = \left( 1,\; 1,\; 1,\; \frac{1}{2},\; \frac{1}{2};\;
\frac{1}{2},\; 0,\; -1,\; -1,\; 1,\; 0,\; -1,\; 1,\; -\frac{1}{2},\; \frac{1}{2} \right),
\]
\[
(D_4;V_4) = \left( 1,\; \frac{1}{2},\; \frac{1}{2},\; 1,\; 1;\;
0,\; 0,\; \frac{1}{2},\; -1,\; -1,\; 0,\; \frac{1}{2},\; 0,\; 0,\; \frac{1}{2} \right),
\]
and finally got the following two SOS representations:
\begin{align}
x^{12}-\frac{1}{2}x+\frac{5}{2}=&\, \left( {x}^{6}-1/2\,{x}^{4}-1/2\,{x}^{3}-3/4\,{x}^{2}-x/4 \right)
^{2}\notag \\
&+\left( {x}^{5}+1/2\,{x}^{4}-{x}^{2}-x \right) ^{2}
+\left( {x}^{4}+{x}^{3}-x-1/32 \right) ^{2}
\notag\\
&+\left( {x}^{3}+{x}^{2}
-x/2-{\frac {29}{32}} \right) ^{2}
+\left( 1/2\,{x}^{2}+x/2-1
 \right) ^{2}\notag\\
& +\left( x/2-{\frac {15}{32}} \right) ^{2}+{\frac {469
}{1024}}
\label{eq317}\\
=&
\left( {x}^{6}-\frac{1}{2}\,{x}^{4}-\frac{1}{4}\,{x}^{2}+\frac{1}{4} \right) ^{2}
 + \left( {x}^{5}+\frac{1}{2}\,{x}^{2}-x-\frac{1}{4} \right) ^{2}
 \notag\\
&+\left( \frac{1}{2}\,{x}^{4}-{x}^{3}+\frac{1}{2}x-\frac{1}{16} \right) ^{2}+\left( \frac{1}{2}\,{x}^{3}-\frac{1}{8} \right) ^{2}
\notag\\
&+\left( x+\frac{1}{16} \right) ^{2}+ \left( {x}^{2}+\frac{1}{2}x-{\frac {17}{16}}
 \right) ^{2}+{\frac {313}{256}}.
 \label{2x12x5}
\end{align}
From~\eqref{eq317} we have the following SOS representation of $p(x)$: 
\begin{align}
&p(x)=2\, \left( {x}^{8}-{x}^{7}+1
/2\,{x}^{6}-3/4\,{x}^{4}-1/2\,{x}^{2}-x/4 \right) ^{2}\notag\\
&~~~~~~~~+2\, \left( {x}^
{7}-1/2\,{x}^{6}+1/2\,{x}^{5}-1/2\,{x}^{4}-x \right) ^{2}\notag\\
&~~~~~~~~+2\, \left( {
x}^{6}+{\frac {31\,{x}^{2}}{32}}-{\frac {31\,x}{32}}-1/32 \right) ^{2}\notag\\
&~~~~~~~~
+2\, \left( {x}^{5}-1/2\,{x}^{3}+{\frac {19\,{x}^{2}}{32}}+{\frac {13
\,x}{32}}-{\frac{29}{32}} \right) ^{2}\notag\\
&~~~~~~~~+2\, \left( 1/2\,{x}^{4}-{x}^{2}
+3/2\,x-1 \right) ^{2}\notag\\
&~~~~~~~~+2\, \left( 1/2\,{x}^{3}-{\frac {31\,{x}^{2}}{32
}}+{\frac {31\,x}{32}}-{\frac{15}{32}} \right) ^{2}\notag\\
&~~~~~~~~+{\frac {469\,
 }{512}\left( {x}^{2}-x+1 \right) ^{2}}.
\label{eq-poly-yang-02b}
\end{align}

{\bf Example 9. } 
Consider the following polynomial:
$$
p(x)=2{x}^{16}-4{x}^{15}-2{x}^{14}+4{x}^{13}+2{x}^{12}-{x}^{5}+7
{x}^{4}-9{x}^{3}-7{x}^{2}+9x+6.
$$
In general, if $x=x_0$ is a minimum point of $p(x)$ and $p(x)$, then
$$
(x-x_0)|\gcd(p(x),p'(x)),
$$
and 
$$
p(x)=\left(\gcd(p(x),p'(x)\right)^2\, q(x)+p(x_0).
$$
For $p(x)$ given in this example, we have
$$
\gcd(p'(x),p(x)=x^2-x-1,
$$
therefore
$$
p(x)=(x^2 - x -1)^2( 2x^{12} - x + 5)+1,
$$
Recall that we have constructed 4 different SOS representations of $2x^{12} - x + 5$ in previous example. Using the result \eqref{2x12x5}, we have

\begin{align}\label{eq-poly-yang-05}
&
~~~~
2{x}^{16}-4{x}^{15}-2{x}^{14}+4{x}^{13}+2{x}^{12}-{x}^{5}+7
{x}^{4}-9{x}^{3}-7{x}^{2}+9x+6\notag\\
&~~~~~~~~=2 \left( {x}^{8}-{x}^{7}-\frac{3}{2}
{x}^{6}+\frac{1}{2}{x}^{5}+\frac{1}{4}{x}^{4}+\frac{1}{4}{x}^{3}+\frac{1}{2}{x}^{2}-\frac{1}{4} x-\frac{1}{4}
 \right) ^{2}\notag\\
&~~~~~~~~\phantom{=}+2 \left( {x}^{7}-{x}^{6}-{x}^{5}+\frac{1}{2}{x}^{4}-\frac{3}{2}{x}
^{3}+\frac{1}{4}{x}^{2}+\frac{5}{4}x+\frac{1}{4} \right) ^{2}\notag\\
&~~~~~~~~\phantom{=}+2 \left( \frac{1}{2}{x}^{6}-\frac{3}{2}
{x}^{5}+\frac{1}{2}{x}^{4}+\frac{3}{2}{x}^{3}-{\frac {9}{16}{x}^{2}}-{\frac {7
}{16}x}+\frac{1}{16} \right) ^{2}\notag\\
&~~~~~~~~\phantom{=}+2 \left( \frac{1}{2}{x}^{5}-\frac{1}{2}{x}^{4}-\frac{1}{2}
{x}^{3}-\frac{1}{8}{x}^{2}+\frac{1}{8} x+\frac{1}{8} \right) ^{2}\notag\\
&~~~~~~~~\phantom{=}+2 \left( {x}^{4}-\frac{1}{2}{x}^
{3}-{\frac {41}{16}{x}^{2}}+{\frac {9}{16}x}+{\frac{17}{16}}
 \right) ^{2}
 \notag\\&~~~~~~~~\phantom{=}
 +2 \left( {x}^{3}-{\frac {15}{16}{x}^{2}}-{\frac {17
}{16}x}-\frac{1}{16} \right) ^{2}
+{\frac {313 }{128}\left( {x}^{2}-x-1 \right) ^
{2}}
+1. 
&
\end{align}
ÏUsing YAMILP, we obtained the following SOS representation: 
$$
p(x)=f_1^2+f_2^2+\cdots+f_9^2,
$$
where
\begin{align*}
&f_1 = - 2.282003164 - 1.8515  x+ 2.9548  x^2 + 0.3730  x^3- 0.1631  x^4\\
&\phantom{xxxx} + 0.4538  x^5 - 0.3186  x^6 + 0.1180  x^7 - 0.1230  x^8,\\
&f_2 = 0.3504639324 + 0.2600  x + 0.1273  x^2 + 0.5308  x^3+ 0.3520  x^4\\
&\phantom{xxxx} + 0.8840  x^5- 2:1220  x^6 - 1.3088  x^7 + 1.2890  x^8,\\
&f_3 = -0.01057227405 - 0.2774  x - 0.1471  x^2 + 1.5611  x^3+ 0.7024  x^4\\
&\phantom{xxxx} - 1.5902  x^5- 0.3720  x^6 + 0.3534  x^7 + 0.0756  x^8,
\end{align*}
\begin{align*}
&f_4 = 0.392835776 + 0.3929  x+ 0.4646  x^2 + 0.8757  x^3- 1.8251  x^4\\
&\phantom{xxxx} - 0.2674  x^5 + 0.4692  x^6 - 0.5102  x^7 + 0.3436  x^8,\\
&f_5 = -0.2120484475 - 0.9874  x- 0.9281  x^2 + 0.9047  x^3+ 0.1473  x^4\\
&\phantom{xxxx}+ 0.8984  x^5 + 0.7269  x^6 - 0.6242  x^7 - 0.1176  x^8,\\
&f_6 = 0.1701900667 - 0.5797  x- 0.3835  x^2 - 0.0336  x^3- 0.6067  x^4\\
&\phantom{xxxx} + 0.1371  x^5- 0.6304  x^6 + 0.7216  x^7 - 0.1110  x^8,\\
&f_7 = -0.4560569811 - 0.2364  x - 0.4610  x^5 - 0.2038  x^4 - 0.4060  x^3\\
&\phantom{xxxx}- 0.3606  x^2 - 0.0081  x^6 - 0.3210  x^7 + 0.4070  x^8,\\
&f_8 = -0.4829986477 + 0.3978  x- 0.1909  x^2 + 0.1727  x^3- 0.0856  x^4\\
&\phantom{xxxx}+ 0.1339  x^5 - 0.0990  x^6 + 0.1427  x^7 - 0.0877  x^8,\\
&f_9 = 8.341098855\times 10^{- 5} + 3.8276\times 10^{- 4}  x + 4.8301\times 10^{- 4}  x^2 \\
&\phantom{xxxx}+ 8.8392\times 10^{- 4}  x^3+ 0.0014  x^4+ 0.0023  x^5+ 0.0037  x^6 \\
&\phantom{xxxx}+ 0.0061  x^7 + 0.0099  x^8.
\end{align*}

{\bf Example 10. } Express the following polynomial in SOS form: 
\begin{align*}
p(x)=&{x}^{28}+{x}^{27}+{x}^{26}+{x}^{25}+{x}^{24}+{x}^{23}+{x}^{22}+{x}^{21
}+{x}^{20}+{x}^{19}
\\
&
+{x}^{18}+{x}^{17}+{x}^{16}+{x}^{15}+{x}^{14}+{x}^{
13}+{x}^{12}+{x}^{11}+{x}^{10}
\\
&+{x}^{9}+{x}^{8}+{x}^{7}+{x}^{6}+{x}^{5}
+{x}^{4}+{x}^{3}+{x}^{2}+x+1.
\end{align*}

For this problem. we observe the following SOS representations:
\begin{align*}
&\phantom{xxx}x^4+x^3+\ldots+x+1=
(x^2+\frac{1}{2}x)^2+\frac{3}{4}(x+\frac{2}{3})^2+\frac{2}{3},\\[3pt]
&\phantom{xxx}x^6+x^5+\ldots+x+1=x^2(x^4+x^3+\ldots+x+1)+x^2+x+1\\[3pt]
&
=(x^3+\frac{1}{2}x^2)^2+\frac{3}{4}(x^2+\frac{2}{3}x)^2+\frac{2}{3}(x+\frac{3}{4})^2+\frac{5}{8},\\[3pt]
&\phantom{xxx}x^8+x^7+\ldots+x+1=x^2(x^6+x^5+\ldots+x+1)+x^2+x+1\\[3pt]
&\;=(x^4+\frac{1}{2}x^3)^2+\frac{3}{4}(x^3+\frac{2}{3}x^2)^2+\frac{2}{3}(x^2+\frac{3}{4}x)^2
+\frac{5}{8}(x+\frac{4}{5})+\frac{3}{5}.
\end{align*}
Therefore, for this polynomial, we try to search lower triangular matrix $L$ in the following form
\begin{equation}
L=\left(\begin{array}{cccccccc}
a_n&\\
b_n&a_{n-1}&\\
0&b_{n-1}&a_{n-2}&\\
0&0&b_{n-2}&a_{n-3}&\\
\vdots&\vdots&\vdots&\vdots&\ddots&\\
0&0&0&0&\cdots&a_1&\\
0&0&0&0&\cdots&b_1&\;\,a_0
\end{array}\right)\quad (n=14),
\label{eqs1-3}
\end{equation}
to make 
$$
p(x)=(x^{14},\ldots,x,1)L\, L^T(x^{14},\ldots,x,1).
$$
So we get $2n+1$ equations,   

\begin{align*}
&\,\left(a_{i,i}|_{i=28,\ldots,2,1}\right)
=\left(1,\,\sqrt{\frac{3}{4}},\,\sqrt{\frac{4}{6}},\,\sqrt{\frac{5}{8}},\,\sqrt{\frac{6}{10}},\,\sqrt{\frac{7}{12}},\,
\sqrt{\frac{8}{14}},\;\ldots,\; \sqrt{\frac{16}{30}},
\right),\\
&\,\left(a_{i,i-1|i=28,\ldots,2}\right)
=\left(\sqrt{\frac{1}{4}},\,\sqrt{\frac{1}{6}},\,\sqrt{\frac{1}{8}},\,\sqrt{\frac{1}{10}},\,
\sqrt{\frac{1}{12}},\,\sqrt{\frac{1}{14}},\,
\ldots,\; \sqrt{\frac{1}{30}}\right).
\end{align*}
Thus, 
\begin{align}
p(x)=&
\left( {x}^{14}+\frac{1}{2}\,{x}^{13} \right) ^{2}
+\frac{3}{4}\, \left( {x}^{13}+\frac{2}{3}\,{x}^{12} \right) ^{2}
+\frac{2}{3}\, \left( {x}^{12}+\frac{3}{4}\,{x}^{11} \right) ^{2}
\nonumber\\
&
+\frac{5}{8}\, \left( {x}^{11}+\frac{4}{5}\,{x}^{10} \right) ^{2} +\frac{3}{5}\, \left( {x}^{10}+\frac{5}{6}\,{x}^{9} \right) ^{2}
 +{\frac {7}{12}\, \left( {x}^{9}+\frac{6}{7}\,{x}^{8} \right) ^{2}}
\nonumber\\
&
+\frac{4}{7}\, \left( {x}^{8}+{\frac {7}{8}\,{x}^{7}} \right) ^{2}
+{\frac {9}{16} \left( {x}^{7}+{\frac {8}{9}\,{x}^{6}} \right) ^{2}}+\frac{5}{9}\, \left( {x}^{6}+{\frac {9}{10}\,{x}^{5}} \right) ^{2}\nonumber
\\
&
+{\frac {11}{20} \left( {x}^{5}+{\frac {10}{11}\,{x}^{4}} \right) ^{2}}
+{\frac {6}{11} \left( {x}^{4}+{\frac {11}{12}\,{x}^{3}} \right) ^{2}}
\nonumber\\
&
+{\frac {13}{24} \left( {x}^{3}+{\frac {12}{13}\,{x}^{2}} \right) ^{2}}
+{\frac {7}{13} \left( {x}^{2}+{\frac {13}{14}\,x} \right) ^{2}}
\nonumber\\
&
+{\frac {15}{28} \left( x+{\frac{14}{15}} \right) ^{2}}
+{\frac{8}{15}}.
\label{poly28}
\end{align}

\section{Rational SOS for multi-variate polynomials}

A ``project and lifting'' technique used in factorization of a multivariate polynomial \( F(x_1, x_2, \ldots, x_n) \) is as follows: in the {\it project\/} step we take a sequence of sufficiently large natural numbers 
\[ k_1 < k_2 < \ldots < k_n, \]
substitute the following transformation rule
\begin{equation}
x_1 = t^{k_1}, \; x_2 = t^{k_2}, \; \ldots, \; x_n = t^{k_n},
\label{substt}
\end{equation}
into $F(x_1,x_2,\ldots,x_n)$, and factorize
the obtained univariate polynomial
$$
G(t)=F(t^{k_1},t^{k_2},\ldots,t^{l_n})
\longrightarrow 
g_1(t) \times \ldots \times g_m(t).
$$
In the lifting step, we transform each polynomial $g_i(t)$ to a multi-variate polynomial $f_i(x_1,x_2,\ldots,x_n)$ 
through the following {\it successive remainder computation\/}: 
\begin{align}
   g_i(t)=r_0&\;{}{\longrightarrow}
   r_1(x_n;t):=\text{rem}(r_0,t-x_n^{k_n}, t),\nonumber\\
   r_1&\;{}{\longrightarrow}
   r_2(x_{n-1},x_{n};t):=\text{rem}(r_1,t-x_{n-1}^{k_{n-1}}, t),\nonumber\\
   r_2&\;{\longrightarrow}
   r_3(x_{n-2},x_{n-1},x_{n};t):=\text{rem}(r_1,t-x_{n-2}^{k_{n-2}}, t),\nonumber\\
   &\;\vdots\nonumber\\
   r_{n-1}&\;{}{\longrightarrow}
   r_{n}(x_1,\ldots,x_{n-1},x_n):=\text{rem}
   (r_{n-1},t-x_1^{k_1},t).
   \nonumber\\
  &\phantom{{}\longrightarrow}f_i:=r_n(x_1,x_2,\ldots,x_n).
  \label{eq-src}
\end{align}
Finally, we get the factorization of $F$:
$$
F(x_1,x_2,\ldots,x_n)=
\prod_{i=1}^{m}f_i(x_1,x_2,\ldots,x_n).
$$

This technique can also be applied to reconstruct the SOS representation of multivariate positive definite polynomials, provide we know already that this polynomial can be expressed in SOS forms.
The main process is as follows: 

Let 
\( p(x_1, x_2, \ldots, x_n) \) be a  multivariate, $k_1<k_2<\ldots<k_n$ be the sequence of natural numbers generated by the following procedure:
\begin{align*}
&k_1=1, \\   
&k_2=1+\deg(p(x_1=t^{k_1}),t)\\
&k_3=1+\deg(p(x_1=t^{k_1},x_2=t^{k_2}),t),\\
&\vdots\\
&k_n=1+\deg(p(x_1=t^{k_1},x_2=t^{k_2},\ldots, x_{n-1}=t^{k_{n-1}}),t), 
\end{align*}
and 
$$
G(t)=p(x_1=t^{k_1},x_2=t^{k_2},\ldots, x_{n-1}=t^{k_{n-1}}, x_n=t^{k_n}).
$$

Clearly, the projected polynomial $G(t)$ is positive definite if $p(x_1,x_2,\ldots,x_n)$ is positive definite. 
and therefore we can write $G(t)$ as a sum of square of certain polynomials: 
\[
G(t) = g_N(t)^2 + g_{N-1}(t)^2 + \ldots + g_1(t)^2 + g_0.
\]
Let $q_0=g_0$, and $q_1,q_2,\ldots,q_N\in \mathbb{R}[x_1,x_2,\ldots,x_n]$ be multivariate polynomials obtained by the previously mentioned successive remainder computation from $g_1(t),g_2(t),\ldots,g_N(t)$, respectively. Then,
\begin{equation}
p=q_N(x_1,x_2,\ldots,x_n)^2+q_{N-1}(x_1,x_2,\ldots,x_n)^2+\cdots+q_1(x_1,x_2,\ldots,x_n)^2+q_0^2
\label{eq-proj-lift}    
\end{equation}

In general, $N$ is much larger than the degree of $p(x_1,x_2,\ldots,x_n)$, which brings a very high computational complexity for solving the equation:
$$
G(t)=(t^N,t^{N-1},t^{N-2},\ldots,t^2,t,1)LL^T(t^N,t^{N-1},t^{N-2},\ldots,t^2,t,1)^T.
$$
To solve this difficulty, we can try to find
integers $s_j\,(j=1,2,\ldots,l)$ so that  
$$
0\leq {s_1}<{s_2}<\cdots< s_{l-1}>{s_l}<N
$$
and try to find a lower-triangular matrix $L$
so that 
$$
G(t)=
(t^{N}, t^{s_l},t^{s_{l-1}}, \ldots,t^{s_2},t^{s_{1}})
LL^T
(t^{N}, t^{s_l},t^{s_{l-1}}, \ldots,t^{s_2},t^{s_{1}})^T.
$$
Clearly, if this equation has a solution, then we can reconstruct an SOS of $G(t)$ with less computation, and therefore finally construct an SOS representation of $p(x_1,x_2,\ldots,x_n)$ using the successive remainder computation. 

The following  {\it power selection procedure\/} can be used to construct a such sequence $(t^{N}, t^{s_l},t^{s_{l-1}}, \ldots,t^{s_2},t^{s_{1}})$:

{\tt Input:} a multivariate polynomial $p(x_1,x_2,\ldots,x_n)$. 

{\tt Initialize step:}  Let
$$
{\cal N}=\emptyset,
\quad 
{\cal M}=\{(d_1,d_2,\ldots,d_n): \text{coeff}(p,x_1^{d_1}x_2^{d_2}\cdots x_n^{d_n})\not=0\}, 
$$
{\tt for} $(d_1,d_2,\ldots,d_n)\in {\cal M}$ {\tt do}: 
if $\mod(d_i,2)=0$ for $i=1,2,\ldots,n$ then 
$$
{\cal N}\leftarrow (d_1/2,d_2/2,\ldots,d_n/2),\quad
{\cal M}={\cal M}\setminus\{(d_1,d_2,\ldots,d_n)\}.
$$
{\tt end if, end do.}

{\tt Loop: } {\tt while} ${\cal M}\not=\emptyset$ {\tt for} $m=(d_1,d_2,\ldots,d_n)\in {\cal M}$ {\tt do}: {\tt for} $j$ from 1 to $\#{\cal N}$ {\tt do} 
$$
m_j:=(e_1,e_2,\ldots,e),
\text{ the $j$-th element of } {\cal N}, 
$$
a{\tt if }
$$
d_i\geq e_i \text{ for all } i=1,2,\ldots,n \text{ and }
d_1+d_2+\ldots+d_n>e_1+e_2+\ldots+e_n,
$$
then
$$
{\cal N}\leftarrow 
(d_1-e_1,d_2-e_2,\ldots,d_n-e_n), 
\quad
$$
{\tt end if}, {\tt end do}, 
$$
{\cal M}={\cal M}\setminus \{m\}.
$$
{\tt end do.}

{\tt output: }
$$
{\cal S}:=\left\{ {k_1l_1+k_2l_2+\ldots+k_nl_n},\; (l_1,l_2,\ldots,l_n)\in {\cal N} \right\}.
$$

Without loss of generality, we may assume that ${\cal S}$ contains $l+1$ integers and they are arranged as follows:
$$
s_1<s_2<\cdots<s_l<N=s_{l+1}. 
$$
Let 
$$
{\cal T}=(t^N,t^{s_l},\ldots,t^{s_2},t^{s_1}).
$$
Then we can apply the algorithm we have presented in Section~\ref{algorithm} to construct the SOS in the following form:
\begin{equation}\label{eq-02-0005}
q(t)=
(t^N,t^{s_l},\ldots,t^{s_1})HH^T(t^N,t^{s_l},\ldots,t^{s_1})
^T
=\sum_{j=1}^{l+1}\left(\sum_{i=1}^{j}a_{j,i}t^{s_i}\right)^2,
\end{equation}
and finally, reconstruct $q_(x_1,x_2,\ldots,x_n)$ by performing the 
successive remainder computation on the univariate polynomial  
$$
g_j=\sum_{i=1}^{j}a_{j,i}t^{s_j}, \quad j=1,2,\ldots,l+1. 
$$
The proof of the existence of the SOS in \eqref{eq-02-0005} is rather complicated. Here we just show a few examples.
\medskip 

{\bf Example 11. } The following polynomial came from Find \cite{Powers1998}. 
$$
f(x,y,z)={x}^{4}+{x}^{3}z+2\,{x}^{2}{y}^{2}+{z}^{4}.
$$
Using YAMILP, this polynomial can be expressed
in the following SOS form:
$$
f(x,y,z)
=f_1^2+f_2^2+f_3^2+f_4^2+f_5^2,
$$
where
\begin{align*}
&f_1 = 1.4142  x  y,\\
&f_2 = 0.9238  x^2 - 0.6473  z^2 + 0.6329  x  z,\\
&f_3 = -0.0406  x^2 - 0.7182  z^2 - 0.6752  x  z,\\
&f_4 = 0.3807  x^2 + 0.2554  z^2 - 0.2945  x  z,\\
&f_5 = 8.9800\times 10^{ - 7}  y  z.
\end{align*}
According to this result, we guess that
$f(x,y,z)$ can be expressed as
$$
f(x,y,z)=2x^2y^2+q_2(x,z)^2+q_3(x,z)^2+q_4(x,z)^2.
$$
Below we show the main procedures to reconstruct 
an SOS representation of $f(x,y,z)$ with rational coefficients.

In the {\it univariate polynomial projection} step we decide
$$
k_1=1, \; k_2=4, \; k_3=13
$$
and
$$
G(t)=p(t,t^5,t^{13})=
(t,t^5,t^{13})={t}^{52}+{t}^{16}+2\,{t}^{12}+{t}^{4}. 
$$
In the {\it power selection procedure\/} we have
\begin{align*}
    {\cal M}=&\;\{(4,0,0),(3,0,1),(2,2,0),(0,0,4)\},\\
    {\cal N}\leftarrow &\;\{(2,0,0),(1,1,0),(0,0,2)\},\\
    {\cal M}=&\;\{(3,0,1)\},\\
    {\cal N}\leftarrow&\; \{(1,0,1)\},\\
    M=&\;\emptyset,\\
    {\cal N}=&\;\{(2,0,0),(1,1,0),(0,0,2),(1,0,1)\},
\end{align*}
and therefore 
$$
{\cal T}=\text{subst}(\{x=t,y=t^5,z=t^{13}\},\,
\{{x}^{2},yx,xz,{z}^{2}\})=\{{t}^{2},{t}^{6},{t}^{14},{t}^{26}\}.
$$
Thus we search DDP SOS form of the following form:
\begin{align}
  G(t) &= \left( a_{{3}}{t}^{26}+a_{{2}}{t}^{14}+a_{{1}}{t}^{6}+a_{{0}}{t}^{2}
 \right) ^{2}\notag\\
 &+ \left( b_{{2}}{t}^{14}+b_{{1}}{t}^{6}+b_{{0}}{t}^{2}
 \right) ^{2}\notag\\
 &+ \left( c_{{1}}{t}^{6}+c_{{0}}{t}^{2} \right) ^{2}\notag\\
  &+({d_{{0}}}{t}^{2})^2.\notag
\end{align}
The result of this step is
$$
a_3=1,\; a_2=0,\; a_1=0,\; a_0=-1/2,
$$
$$
b_2=1,\; b_1=0,\; b_0=1/2,
$$
$$
c_1=\sqrt{2},\; c_0=0,\; d_0=\sqrt{2}/2,
$$
which leads to the following SOS form:
\begin{equation}\label{eq-000013}
{t}^{52}+{t}^{16}+2\,{t}^{12}+{t}^{4}=\left( {t}^{26}-\frac{1}{2}\,{t}^{2} \right) ^{2}+ \left( {t}^{14}+\frac{1}{2}\,{t}^{
2} \right) ^{2}+2\,{t}^{12}+\frac{1}{2}\,{t}^{4}.
\end{equation}
Finally, by performing the following {\it successive remainder computing\/}: 
$$
\mathrm{rem}(~\cdot~,t^{13}-z,t),~\mathrm{rem}(~\cdot~,t^{5}-y,t), ~\mathrm{rem}(~\cdot~,t-x,t).
$$
we get obtain the desired SOS representation.
\begin{equation}\label{eq-000015}
{x}^{4}+{x}^{3}z+2\,{x}^{2}{y}^{2}+{z}^{4}=\left( {z}^{2}-\frac{1}{2}\,{x}^{2}
 \right) ^{2}+ \left( xz+\frac{1}{2}\,{x}^{2} \right) ^{2}+2\,{x}^{2}{y}^
{2}+ \frac{1}{2}\,{x}^{4}.
\end{equation}

{\bf Example 12.} Consider the following bi-variate polynomial: 
$$
p(x,y)=x^6+2x^5y+5x^2y^4+4xy^5+y^6.
$$
Since it is a homogeneous polynomial  and can be easily transformed to a univariate polynomial, which guarantees the existence of SOS form expression of $p(x,y)$. Using YAMILP we have the following result: \begin{align*}
&f_1 = -0.9043  y^3 - 2.2666  x  y^2 - 0.0891  x^2  y + 0.5546  x^3,\\
&f_2 = 0.0788  y^3 - 0.1529  x  y^2 - 1.5081  x^2  y - 0.7385  x^3,\\
&f_3 = -0.3969  y^3 + 0.0644  x  y^2 + 0.1490  x^2  y - 0.3600  x^3,\\
&f_4 = -0.1356  y^3 + 0.0896  x  y^2 - 0.0809  x^2  y + 0.1321  x^3.
\end{align*}

In our method, we take $x=t,y=t^7$, and 
$$
G(t)=p(t,t^7)=
t^{42}+4t^{36}+5t^{30}+2t^{12}+t^6.
$$
In the power selection procedure we have
\begin{align*}
    {\cal M}=&\;\{(6,0),(5,1),(2,4),(1,6),(0,6)\},\\
    {\cal N}\leftarrow &\;\{(3,0),(1,2),(0,3)\},\\
    {\cal M}=&\;\{(5,1),(1,5)\},\\
    {\cal N}\leftarrow&\; \{(2,1),(0,3),(1,2)\},\\
    M=&\;\emptyset,\\
    {\cal N}=&\;\{(3,0),(1,2),(0,3),(1,2)\},
\end{align*}
and consequently,
$$
{\cal T}=\text{subst}(\{x=t,y=t^7\}, \{x^3,x^2y,xy^2,y^3\})
=\{t^3,t^9,t^{15},t^{21}\}.
$$
Therefore, we search SOS form of $G(t)$ in the following form:
\begin{align*}
g(t)&=(a_3t^{21}+a_2t^{15}+a_1t^9+a_0t^3)^2\\
&+(b_2t^{15}+b_1t^9+b_0t^3)^2\\
&+(c_1t^9+c_0t^3)^2\\
&+(d_0t^3)^2,
\end{align*}
and obtain the following solution:
\begin{align*}
&a_3=1,\quad a_2=2, \quad a_1=0, \quad a_0=-1/2,\\
&b_2=1,\quad b_1=1/2, \quad b_0=-1/4,\\
&c_1=3/2,\quad c_0=3/4,\\
&d_0^2=1/8.
\end{align*}
This leads the following SOS representation of $G(t)$
as
\begin{align*}
&t^{42}+4t^{36}+5t^{30}+2t^{12}+t^6\\
=&\left(t^{21}+2t^{15}-\frac{1}{2}t^3\right)^2
+\left(t^{15}+\frac{1}{2}t^9-\frac{1}{4}t^3\right)^2
+\left(\frac{3}{2}t^9+\frac{3}{4}t^3\right)^2
+\frac{1}{8}t^6.
\end{align*}
Performing the successive remainder computing:
$$
{\tt rem}(\cdot, t^7-y,t), \quad
{\tt rem}(\cdot, t-x,t),
$$
finally we get:
\begin{align*}
&x^6+2x^5y+5x^2y^4+4xy^5+y^6\\
=&\left(y^{3}+2xy^{2}-\frac{1}{2}x^3\right)^2
+\left(xy^{2}+\frac{1}{2}x^2y-\frac{1}{4}x^3\right)^2
+\left(\frac{3}{2}x^2y+\frac{3}{4}x^3\right)^2
+\frac{1}{8}x^6.
\end{align*}

{\bf Example 13.}
Find an SOS representation of 
\[
f(x, y, z) = x^6 + 4x^3y^2z + y^6 + 2y^4z^2 + y^2z^4 + 4z^6.
\]
In the project step we use the following substitution:
$$
\textit{proj}: x=t, \quad y=t^7, \quad  z=t^{43}.
$$
It maps $p(x,y,z)$ to the following univariate polynomial
\[
G(t) = f(t, t^7, t^{43}) = 4t^{258} + t^{186} + 2t^{114} + 4t^{60} + t^{42} + t^6.
\]
In the power selection procedure, we have
$$
{\cal M}=\{(3,0,,0),(0,3,0),(0,2,1),(0,1,2),(0,0,3)\}, \quad 
{\cal K}=\{(0,2,1)\}.
$$
and
$$
\{x^3, y^3, y^2z, yz^2, z^3\}\stackrel{\textit{proj}}{\longrightarrow} {\cal T}=\{t^3, t^{21}, t^{57}, t^{93}, t^{129}\}.
$$
Further, we search the SOS representation of \( G(t) \) in the following form:
\[
g(t) = \left( a_4 t^{129} + a_3 t^{93} + a_2 t^{57} + a_1 t^{21} + a_0 t^3 \right)^2
+ \left( b_3 t^{93} + b_2 t^{57} + b_1 t^{21} + b_0 t^3 \right)^2
\]
\[
+ \left( c_2 t^{57} + c_1 t^{21} + c_0 t^3 \right)^2
+ \left( d_1 t^{21} + d_0 t^3 \right)^2
+ (e_0 t^3)^2.
\]
Use our algorithm presented in Section~\ref{algorithm} we obtain the following solution:
\[
a_4 = 2, \quad a_3 = 0, \quad a_2 = 0, \quad a_1 = 0, \quad a_0 = 0,
\]
\[
b_3 = 1, \quad b_2 = 0, \quad b_1 = -1, \quad b_0 = 0,
\]
\[
c_2 = 2, \quad c_1 = 0, \quad c_0 = 1,
\]
\[
d_1 = 0, \quad d_0 = 0, \quad e_0 = 0.
\]
This solution derives the following SOS representation of \( G(t) \):
\[
4 t^{258} + t^{186} + 2 t^{114} + 4 t^{60} + t^{42} + t^6 = (2 t^{129})^2 + (t^{93} - t^{21})^2 + (2 t^{57} + t^3)^2,
\]
and consequently, by performing the following successive remainder computation,  
\[
\text{rem}({ ~\cdot~}, t^{43} - z, t), \quad \text{rem}(~\cdot~, t^{7} - y, t), \quad \text{rem}(~\cdot~, t - x, t),
\]
we finally obtain the SOS representation of \( p(x,y,z) \):
\[
x^6 + 4 x^3 y^2 z + y^6 + 2 y^4 z^2 + y^2 z^4 + 4 z^6 = (2 z^3)^2 + (z^2 y - y^3)^2 + (x^3 + 2 y^2 z)^2.
\]

\textbf{Example 14. } 
(Motzkin's counterexample \cite{Motzkin1965}) 
Show the following polynomial
$$ f( {x,y} ) = {x}^{4}{y}^{2} + {x}^{2}{y}^{4} - 3{x}^{2}{y}^{2} + 1 $$
does not admit a sum of squares (SOS) representation, and construct an SOS form of 
$$
(x^2+y^2+1)f(x,y).
$$

Assume that $f(x,y) \ge 0$ admits a sum of squares decomposition. Let
$$
\textit{proj}: x=t,\; y=t^4.
$$
Then $f(x,y)$ is projected to
$$
G(t) = f(t,t^5) = t^{22} + t^{14} - 3 t^{12} + 1 
$$
In power selection procedure, we have
$$
{\cal N}=\{(2,1),(1,2),(1,1),(0,0)\}, 
$$
and
$$
{\cal T}=\{1, t^6, t^7, t^{11}\}.
$$
So if $G(t)$ has any SOS representation, 
it must be the following form:
$$ G(t)= \left( a_3 t^{11} + a_2 t^7 + a_1 t^6 + a_0 \right)^2 + \left( b_2 t^7 + b_1 t^6 + b_0 \right)^2 + \left( c_1 t^6 + c_0 \right)^2 + d_0^2. 
$$
By comparing the coefficient of $t^{12}$, we have the following equation 
$$
a_1^2 + b_1^2 + c_1^2 + 3=0,
$$
which has no real solution, therefore, 
$G(t)$ does not admit a sum of squares, and therefore, $f(x,y) $ does not have a sum of squares representation.

Let 
\begin{align*}
f_1(x,y)=&\;(x^2 + y^2 + 1) f(x,y)\\    
=&\;x^6 y^2 + 2 x^4 y^4 + x^2 y^6 - 2 x^4 y^2 - 2 x^2 y^4 - 3 x^2 y^2 + x^2 + y^2 + 1.
\end{align*}
We aim to construct an SOS for this polynomial. For this problem, we have
$$
x=t,\quad y=t^7,
$$
and
$$
G_1(t)=t^{44}+2t^{32}-2t^{30}+t^{20}-2t^{18} 
-3t^{16}+t^{14}+t^2+1. 
$$
In the power selection procedure we have
$$
{\cal N}=\{(3,1),(2,2),(1,3),(2,1),(1,2),(1,1),(1,0),(0,1),(0,0)\}, 
$$
and
$$
{\cal T} = \{1, t, t^7, t^8, t^9, t^{10}, t^{15}, t^{16}, t^{22}\}.
$$
Let 
$$
V=\left(t^{22},t^{16},t^{15},t^{10},t^{9},t^{8},t^{7},t,1\right).  
$$
Consider the following following equation
$$
G_1(t)=(VL)(VL)^T,
$$
where $L$ is a lower triangular matrix of order $9$:
$$
L=\left(\begin{array}{cccccc}
a_{99}&\\
a_{98}&a_{88}&\\
a_{97}&a_{87}&a_{77}&\\
\vdots&\vdots&\vdots&\ddots&\\
a_{91}&a_{81}&a_{71}&\cdots&a_{11}&\\
a_{90}\;&\;a_{80}\;&\;a_{70}\;&\;\cdots\;&\;a_{10}\;&\;a_{00}
\end{array}
\right).
$$
It derives a system with $9$ polynomial equations, involving $45$ variables. Among the variables, $17$ are in the border $B$, $7$ are diagonal variables, and $21$ are in the core $C$. Following is one solution we found:
$$
L^{\ast}=\left(\begin{array}{ccccccccc}
1&\\
0&1&\\
0&0&1&\\
1/2&0&0&\sqrt{3}/2&\\
0&0&0&0&1&\\
-3/2&0&0&-\sqrt{3}/2&0&0\\
0&0&0&0&1&0&0\\
0&0&-1&0&0&0&0&0\\
0&\;-1\;\;&0&0&\;\;0\;\;&\;0\;&\;\;0\;\;&\;\;0\;\;&\;\;0
\end{array}
\right).
$$
which leads to the SOS expressions:
$$
G_1(t) = \left(t^{22} + \frac{1}{2} t^{10} - \frac{3}{2} t^8 \right)^2 + \left(t^{16} - 1\right)^2
+ \left(t^{15} - t \right)^2 
$$ 
$$
+ \left( \frac{t^{10} \sqrt{3}}{2} - \frac{t^8 \sqrt{3}}{2} \right)^2 + \left(t^9 - t^7\right)^2,
$$
and
$$
f_1(x,y) = \frac{1}{4} \left( 2xy^3 + x^3 y - 3xy \right)^2 + (x^2 y^2 - 1)^2 
$$
$$
+ (xy^2 - x)^2 + \frac{3}{4} (x^3 y - xy)^2 + (x^2 y - y)^2.
$$
The following sum of squares is can seen in Parrilo \cite{Parrilo2003}:
\begin{align*}
(x^2 + y^2 + 1) f(x,y) &= \left( x^2 y - y \right)^2 + \left( xy^2 - x \right)^2 + \left( x^2 y^2 - 1 \right)^2 \\
&+ \frac{1}{4} \left( xy^3 - x^3 y \right)^2 + \frac{3}{4} \left( xy^3 + x^3 y - 2xy \right)^2,
\end{align*}
this form corresponds to the following univariate polynomial under the project 
$$
G_1(t)=(t^8-t^6)^2+(t^{15}-t)^2+(t^{16}-1)^2
$$
$$
+
\frac{1}{4}(t^{22}-t^{10})^2+\frac{3}{4}(t^{22}-t^{10}-2t^{8})^2. 
$$

\section{Conclusion}

The SOS decomposition problem for (semi)-positive semidefinite polynomials is well-known and has been extensively studied by many researchers (see \cite{Motzkin1965,Choi1995,Reznick2000,Rudin2000,Oliveira2006,Harrison2007,Schmuedgen2012}). 
Numerical implementations (such as YAMILP, SeDuMi, and SOSTOOLS) are available in various mathematical software packages (see \cite{Powers1998,Parrilo2003,Parrilo2012,Powers2015}).  
Typically, the focus has been on efficiently solving the factorization problem \( B = HH^T \) in the space of \( (d+1) \times (d+1) \) symmetric matrices. 

A few years ago, while the authors were discussing how to construct the exact SOS representation for several relatively simple univariate polynomials came from in \cite{Feng1982,Feng2020} by hand, we luckily found that it is always possible to generate an SOS of degree-descending polynomials. In most cases, when solving the equations and inequalities generated by these problems, a solution exists even when the core variables (i.e., \( a_{i,j} \) with \( 1 \leq j < i \leq d \)) are set to zero. 
In the past, research on SOS with rational coefficients (see \cite{PeyrlParrilo2007a,PeyrlParrilo2007b,KLYZ2008,Kaltofen2009,KLYZ2012,Menini2015}
)  either focused on finding rational points in the semialgebraic set or on recovering the accurate solution from results with errors. As a result, people generally did not pay attention to the fine structure of the solutions to this problem.

Our method relies on extensive computation in the subprogram for constructing the inequality \( {\Qo}(C, D) > 0 \), so it is efficient mainly when the input polynomials are  lower degrees. 
We have also applied our method to solve the SOS problem for multivariate polynomials \cite{Huang2020,HZYR2024}, but it is still unable to handle more complicated cases, 
such as
$$
\text{delzell}(x_1, x_2, x_3, x_4) = x_1^4 x_2^2 x_4^2 + x_2^4 x_3^2 x_4^2 + x_1^2 x_3^4 x_4^2 - 3x_1^2 x_2^2 x_3^2 x_4^2 + x_3^8.
$$
Nevertheless, our method is at least helpful for situations where a rigorous proof of positive definiteness is required. 
\smallskip

Finally, the authors suggest referring to our method as the "L-SOS Algorithm" in any future citations, to reflect that in this algorithm, the indeterminates of the lower-triangular matrix are divided into three groups. The indeterminates in the first column and last row, which form the L-shape, are treated as the main variables and represented as rational functions of the other variables (i.e., diagonal variables and core variables).

\newpage 

%
%
%

\begin{thebibliography}{99}
	
	\bibitem{Hilbert1888}
	Hilbert, D.: \"{U}ber die Darstellung definiter Formen als Summe von Formenquadraten. \textit{ Math. Ann.}, \textbf{32}(1888), 342--350.
	
	\bibitem{Hilbert1893}
	Hilbert, D.: \"{U}ber tern\"{a}re definite Formen. \textit{ Acta Mathematica}, \textbf{17}(1893), 169--197.
	
	\bibitem{Hilbert1910}
	Hilbert, D.: Hermann Minkowski. Ged\"{a}chtnisrede. \textit{ Math. Ann.}, \textbf{68}(1910), 445--471.
	
	\bibitem{Artin1927}
	Artin, E.: \"{U}ber die Zerlegung definiter Funktionen in Quadrate. \textit{ Abh. math. Sem. Hamburg}, \textbf{5}(1927), 110--115.
	
	\bibitem{Gathen2003} 
	Von zur Gathen, J., Gerhard, J.: \textit{ Modern Computer Algebra} (3rd ed.). Cambridge University Press, Cambridge (2013). \doi{10.1017/CBO9781139856065}.
	
	\bibitem{Grigoriev1988}
	Grigoriev, D., Vorobjov, N.: Solving systems of polynomial inequalities in subexponential time. \textit{ Journal of Symbolic Computation}, \textbf{5}(1), 37--64 (1988).
	
	\bibitem{Khachiyan1997}
	Khachiyan, L., Porkolab, L.: Computing integral points in convex semi-algebraic sets. In: \textit{ Proceedings 38th Annual Symposium on Foundations of Computer Science}, pp. 162--171 (1997).
	
	\bibitem{lihongzhi2010}
	El Din, M.S., Zhi, L.: Computing rational points in convex semialgebraic sets and sum of squares decompositions. \textit{ SIAM Journal on Optimization}, \textbf{20}(6), 2876--2889 (2010).
	
	\bibitem{xia2016}
	Xia, B., Yang, L.: Automated inequality proving and discovering. World Scientific Publishing Co., Singapore (2016). 
	
	\bibitem{Horn1985}   
	Horn, R.A., Johnson, C.R.: \textit{ Matrix Analysis}. Cambridge University Press, New York (1985).
	
	\bibitem{Motzkin1965}
	Motzkin, T.S.: The arithmetic-geometric inequality. In: Inequalities (Proc. Sympos. Wright-Patterson Air Force Base, Ohio, 1965), pp. 205--224. Academic Press, New York (1967).
	
	\bibitem{Choi1995}
	Choi, M.D., Lam, T.Y., Reznick, B.: Sums of squares of real polynomials. In: \textit{ Symp. on Pure Math.}, Vol. 58, American Mathematical Society, Providence, RI, pp. 103--126 (1995).
	
	\bibitem{Reznick2000}
	Reznick, B.: Some concrete aspects of Hilbert's 17th problem. In: \textit{ Contemporary Mathematics}, Vol. 253, pp. 251--272. American Mathematical Society (2000).
	
	\bibitem{Rudin2000}
	Rudin, W.  Sums of Squares of Polynomials. The American Mathematical Monthly, 107, 813 - 821. (2000)
	
	\bibitem{Oliveira2006}
	Oliveira, M.D.: Decomposition of a polynomial as a sum-of-squares of polynomials and the S-procedure. In: \textit{ Proceedings of the 44th IEEE Conference on Decision and Control, and the European Control Conference, CDC-ECC '05}, 2005, pp. 1654--1659. \doi{10.1109/CDC.2005.1582396}.
	
	\bibitem{Harrison2007}
	Harrison, J.: Verifying nonlinear real formulas via sums of squares. In: \textit{ Proceedings of the 20th International Conference on Theorem Proving in Higher Order Logics}, Springer-Verlag, New York, pp. 102--118 (2007).
	
	\bibitem{Schmuedgen2012}
	Schm\"{u}dgen, K.: Around Hilbert's 17th Problem. \textit{ Documenta Mathematica}, Extra Volume ISMP, 433--438 (2012).
	
	\bibitem{Powers1998}
	Powers, V., W\"{o}rmann, T.: An algorithm for sums of squares of real polynomials. \textit{ Journal of Pure \& Applied Algebra}, \textbf{127}(1), 99--104 (1998).
	
	\bibitem{Parrilo2003}
	Parrilo, P.A.: Semidefinite programming relaxations for semialgebraic problems. \textit{ Math. Program. Ser. B}, \textbf{96}, 293--320 (2003).
	
	\bibitem{Parrilo2012}
	Parrilo, P.A.: Polynomial optimization, sums of squares, and applications. In: \textit{ Semidefinite Optimization and Convex Algebraic Geometry}, pp. 47--157. Edited by G. Blekherman, P. Parrilo, and R. Thomas, SIAM, Series: MOS-SIAM Series on Optimization, Volume 13 (2012).
	
	\bibitem{Powers2015}
	Powers, V.: Positive polynomials and sums of squares: theory and practice. 2015. \url{http://www.mathcs.emory.edu/vicki/preprint/PsdSosSurvey.pdf}
	
	\bibitem{Feng1982}
	Feng, K.Q.: On positive semidefinite polynomials which are not sums of squares. \textit{ Acta Mathematica Sinica}, \textbf{25}(1), 94--108 (1982). 
	
	\bibitem{Feng2020}
	Feng, B.Y.: An algorithm to represent a positive definite sixth-degree polynomial as the sum of squares of polynomials. \textit{ Advances in Applied Mathematics}, \textbf{9}(3), 271--276 (2020). \doi{10.12677/AAM.2020.93032}.
	
	\bibitem{PeyrlParrilo2007a}
	Peyrl, H., Parrilo, P.A.: A Macaulay2 package for computing sum of squares decompositions of polynomials with rational coefficients. In: \textit{ Proceedings of the 2007 International Workshop on Symbolic-Numeric Computation}, ACM, New York, pp. 207--208 (2007).
	
	\bibitem{PeyrlParrilo2007b}
	Peyrl, H., Parrilo, P.A.: SOS.m2: A sum of squares package for Macaulay 2. Available from \url{www.control.ee.ethz.ch/~hpeyrl/index.php} (2007).
	
	\bibitem{KLYZ2008}
	Kaltofen, E., Li, B., Yang, Z.F., Zhi, L.H.: Exact certification of global optimality of approximate factorizations via rationalizing sums-of-squares with floating point scalars. In: \textit{ ISSAC 2008 Proceedings 2008 International Symposium on Symbolic Algebraic Computation}, ACM Press, New York, N.Y., pp. 155--163 (2008).
	
	\bibitem{Kaltofen2009}
	Kaltofen, E.: Exact certification in global polynomial optimization via rationalizing sums-of-squares. In: Robbiano, L., Abbott, J. (eds.) \textit{ Approximate Commutative Algebra}, pp. 1--15. Springer, Vienna (2009). \doi{10.1007/978-3-211-99314-9_9}.
	
	\bibitem{KLYZ2012}
	Kaltofen, E., Li, B., Yang, Z.F., Zhi, L.H.: Exact certification in global polynomial optimization via sums-of-squares of rational functions with rational coefficients. \textit{ J. Symb. Comput.}, \textbf{47}(1), 1--15 (2012). \doi{10.1016/j.jsc.2011.08.002}.
	
	\bibitem{Menini2015}
	Menini, L., Tornamb\`{e}, A.: Exact sum of squares decomposition of univariate polynomials. In: \textit{ 54th IEEE Conference on Decision and Control (CDC)}, Osaka, Japan, 2015, pp. 1072--1077. \doi{10.1109/CDC.2015.7402354}.
	
	\bibitem{Huang2020}
	HUANG Yong , ZENG Zhenbing , YANG Lu , RAO Yongsheng. 
	The Representation of Decreasing Sum of Squares for Univariate Positive Semi-definite Polynomials and Its L-Algorithm, 
	ChinaXiv: DOI:10.12074/202009.00004V3
	
	\bibitem{HZYR2024}
	HUANG Yong , ZENG Zhenbing , YANG Lu , RAO Yongsheng. An Algorithm to Represent Positive Semi-Definite Polynomials to Sum of Lader-Like Squares of Polynomials with Rational Coefficients. Journal of Systems Science and Mathematical Sciences, 2024, 44(5): 1241-1271 https://doi.org/10.12341/jssms23584CM
	
\end{thebibliography}
%

\appendix
\newpage

\end{document}